\newcommand{\Tr}{\mathrm{Tr}}
\newcommand{\vecX}{{\vec X}}
\newcommand{\veca}{{\vec a}}
\newcommand{\veci}{{\vec i}}
\newtheorem{theorem}{Theorem}
\newtheorem{lemma}[theorem]{Lemma}
\newtheorem{proposition}[theorem]{Proposition}
\newtheorem{definition}[theorem]{Definition}
\newtheorem{claim}[theorem]{Claim}
\newcommand{\qed}{\rule{7pt}{7pt}}
\newenvironment{proof}{\noindent{\bf Proof}\hspace*{1em}}{\qed\bigskip}
\newenvironment{proof-sketch}{\noindent{\bf Sketch of Proof}\hspace*{1em}}{\qed\bigskip}
\newenvironment{proof-idea}{\noindent{\bf Proof Idea}\hspace*{1em}}{\qed\bigskip}
\newenvironment{proof-of-lemma}[1]{\noindent{\bf Proof of Lemma #1}\hspace*{1em}}{\qed\bigskip}
\newenvironment{proof-attempt}{\noindent{\bf Proof Attempt}\hspace*{1em}}{\qed\bigskip}
\newenvironment{proofof}[1]{\noindent{\bf Proof
of #1:}}{\qed\bigskip}
\def\fnum@figure{{\bf Figure \thefigure}}
\def\fnum@table{{\bf Table \thetable}}
\long\def\@mycaption#1[#2]#3{\addcontentsline{\csname
  ext@#1\endcsname}{#1}{\protect\numberline{\csname
  the#1\endcsname}{\ignorespaces #2}}\par
  \begingroup
    \@parboxrestore
    \small
    \@makecaption{\csname fnum@#1\endcsname}{\ignorespaces #3}\par
  \endgroup}
\def\mycaption{\refstepcounter\@captype \@dblarg{\@mycaption\@captype}}
\newcommand{\mathify}[1]{\ifmmode{#1}\else\mbox{$#1$}\fi}
\newcommand{\bigO}O
\newcommand{\Z}{{\mathbb Z}}
\newcommand{\F}{{\mathbb F}}
\newcommand{\eqdef}{{\stackrel{\rm def}{=}}}
\newcommand{\poly}{{\rm {poly}}}
\newcommand{\wt}{{\rm {wt}}}
\renewcommand{\vec}[1]{{\mathbf #1}}
\def\fraka{{\mathfrak {a}}}
\def\blfootnote{\xdef\@thefnmark{}\@footnotetext}
\newcommand{\ii}{{\mathbf i}}
\newcommand{\bl}{{\mathbf e}}
\newcommand{\ba}{{\mathbf a}}
\newcommand{\bb}{{\mathbf b}}
\newcommand{\vecY}{{\mathbf Y}}
\newcommand{\mult}{\mathsf{mult}}
\newcommand{\Enc}{{\mathrm {Enc}}}
\newcommand{\bu}{{\mathbf u}}
\newcommand{\by}{{\mathbf y}}
\newcommand{\bz}{{\mathbf z}}
\begin{document}
\title{Some Remarks on Multiplicity Codes  }
\author{Swastik Kopparty\thanks{Department of Mathematics \& Department of Computer Science, Rutgers University. {\tt swastik.kopparty@rutgers.edu}. Research supported in part by a Sloan Fellowship and NSF CCF-1253886.}}
\date{October 20, 2013}
\maketitle

\centerline {\small {\em To Ilya Dumer, on the occasion of his 60th birthday  } } 

\begin{abstract}
Multiplicity codes are algebraic error-correcting codes generalizing
classical polynomial evaluation codes, and are based on evaluating polynomials
{\em and their derivatives}. This small augmentation confers upon them better
{\em local decoding}, {\em list-decoding} and {\em local list-decoding} algorithms
than their classical counterparts. We survey what is known about these codes,
present some variations and improvements, and finally
list some interesting open problems.
\end{abstract}

\newpage
\section{Introduction}

Reed-Solomon codes
and Reed-Muller codes are classical families
of error-correcting codes which have been widely influential in
coding theory, combinatorics and theoretical computer science.
These codes are based on evaluations of polynomials:
a codeword of one of these codes is obtained by evaluating a polynomial
over a finite field $\F_q$ of degree at most $d$ at all points in 
$\F_q^m$.

Multiplicity codes are a family of recently-introduced algebraic
error-correcting codes based on evaluations of polynomials and their
derivatives. Specifically, a codeword of a multiplicity code
is obtained by evaluating a polynomial of degree at most $d$, along
with all its derivatives of order $< s$, at all points in $\F_q^m$.

The $s = 1$ versions of multiplicity codes are thus the classical 
Reed-Solomon ($m=1$) and Reed-Muller ($m \geq 1$) codes. We will see that by allowing $s$ to be larger than $1$, in many senses general multiplicity codes go beyond their $s=1$ counterparts. 

Multiplicity codes with $m = 1$ (i.e., based on univariate polynomials)
were first considered by Rosenbloom and Tsfasman~\cite{RT-m-metric}, who studied
them for the purposes of producing optimal codes for the ``M metric" (now known as the Rosenbloom-Tsfasman metric). They were also studied by Nielsen~\cite{Nielsen}, 
who showed that they admit list-decoding algorithms upto the Johnson bound,
similar to the Reed-Solomon codes.

Multiplicity codes with general $m, s$ were defined by Kopparty, Saraf and Yekhanin~\cite{KSY}. The main result
of~\cite{KSY} was that for every $\epsilon, \alpha > 0$, for all $k$, there are multiplicity codes of dimension $k$,
rate $1 - \alpha$, and which are locally decodable from a constant fraction of errors with in just
$O_{\epsilon, \alpha}(k^{\epsilon})$ time. Prior to~\cite{KSY}, codes with nontrivial local decoding
algorithms were known only at rate $R < 1/2$, and achieving local decoding complexity
$O\left(k^{\epsilon}\right)$ required the code to have rather small rate $R = \epsilon^{(1/\epsilon)}$ (the codes that were known to achieve these parameters were the Reed-Muller codes). It should be noted that more recent results
have shown how to construct codes achieving parameters similar to those of multiplicity codes using significantly different ideas: Guo-Kopparty-Sudan~\cite{GKS}, Guo~\cite{Guo} and Hemenway-Ostrovsky-Wooters~\cite{HOW}.

Subsequently, Guruswami-Wang~\cite{GW} and Kopparty~\cite{K} studied the
list-decoding of univariate multiplicity codes, and showed that
there are sequences of univariate multiplicity codes of rate $R$, list-decodable
from $1 - R - \epsilon$ fraction errors in polynomial time
(achieving the so-called list-decoding capacity, thus providing another 
route to such codes after the breakthrough results of Parvaresh-Vardy~\cite{PV}
and Guruswami-Rudra~\cite{GR}). 

Global decoding of multivariate multiplicity codes was also considered
in~\cite{K}. There it was shown that multivariate multiplicity codes
can be decoded upto half their minimum distance in polynomial time,
and can be list-decoded from the Johnson bound in polynomial time.

The primary purpose of this paper is to survey the state of the art algorithms
for dealing with multiplicity codes. Along the way
we note some variations and improvements. Specifically:
\begin{enumerate}
\item We give an improved local decoding algorithm for multiplicity codes.
The original local decoding algorithm of~\cite{KSY} for multiplicity codes
worked as follows: in order to recover the correct value of the
multiplicity codeword at a point $\ba \in \F_q^m$, one would 
take $s^{O(m)}$ random lines in $\F_q^m$ passing through $\ba$,
query the codeword on all those lines, and use the answer to decode
the correct value at $\ba$. Our improved local decoding algorithm
is based on queries only $\exp(m)$ random lines through $\ba$.

This new algorithm is based on two new ideas.
First, we show that one can extract much more information
from each line about the correct value at $\ba$ than what the previous
algorithm took advantage of. Second, we use a more sophisticated
way of combining information from the different lines. For the
previous algorithm, the problem of combining information from the
various lines through
$\ba$ to recover the correct value of the codeword at $\ba$ amounted to the
problem of decoding a Reed-Muller code. In the new algorithm, this
problem turns out to be a case of decoding a multiplicity code! 

\item The above framework admits a number of variations that could
potentially be interesting for their own sake.

One variation leads to a ``polynomial rate" constant-query
error-correction scheme as follows:
a message $\sigma \in \Sigma_0^n$, where $|\Sigma_0| = \exp(n)$,
 gets encoded into a codeword $c \in \Sigma^n$, where
$\log |\Sigma| = n^{\epsilon} \cdot \log|\Sigma_0|$, such that even if a constant
fraction of the coordinates of $c$ are corrupted,
for any given\footnote{We use $[n]$ to denote the set $\{1, 2, \ldots, n\}$.}
$i \in [n]$ one can recover $\sigma_i$ with high
probability using only $O(1)$ queries into $c$. 
Such large alphabet error-correction schemes were considered by
Beimel and Ishai~\cite{BI01}.


Another variation allows local correction for some low
rate multiplicity codes using only $m$ lines, with a much
simpler local correction algorithm.

\item Using ideas from the above improvements, we give a new algorithm for 
(global) decoding of multivariate multiplicity codes.
The original approach of~\cite{K} was based on a family of
$s^{O(m)}$ space filling curves that passed through all the points
of $\F_q^m$. The new algorithm uses only $\exp(m)$ many curves.
The property of the $s^{O(m)}$ curves used in~\cite{K} was 
``algebraic repulsion":  no nonzero polynomial $P(X_1, \ldots, X_m)$
of moderate degree can vanish on all these curves.
The family of curves that we use in this paper can be smaller
because we require a weaker property: no nonzero polynomial
$P(X_1, \ldots, X_m)$ of moderate degree can vanish
on all these curves {\em with high multiplicity}.

\item We observe that encoding and 
unique decoding algorithms for multiplicity codes can be implemented in 
near-linear time (i.e., they run in time $O(n \cdot (\log n)^{O(1)})$) . For $m=1$, this follows from algorithms nearly identical
to the ones from the classical univariate ($s = m = 1$) case, and for
general $m$ it follows by refining a reduction to the $m = 1$ case given
in~\cite{K}.

\item We gather a number of open questions and possible future
research directions for the study of multiplicity codes.
\end{enumerate}

\paragraph{Organization of this paper:}
 In the next section we formally define multiplicity codes
and state their basic properties.
In Section~\ref{sec:decoding1} we discuss decoding algorithms for univariate multiplicity codes.
In Section~\ref{sec:decoding2} we discuss decoding algorithms for multivariate multiplicity codes.
 In Section~\ref{sec:encoding} we discuss encoding algorithms.
We conclude with some discussion and open questions.

\section{Multiplicity Codes}

We begin with some general preliminaries on codes, polynomials
and derivatives, and then move on to state the basic
definitions and results about multiplicity codes.

\subsection{Codes}

Let $\Sigma$ be a finite set and let $n$ be an integer. We will 
work with $\Sigma^n$ equipped with the (normalized) Hamming metric
$\Delta$, defined by:
$$ \Delta( x, y)  = \Pr_{i \in [n]} [ x_i \neq y_i ].$$

A \underline{code of length $n$ over the alphabet $\Sigma$} is a subset $\mathcal C$ of 
$\Sigma^n$. The rate of the code is defined to be:
$$ R = \frac{\log_{|\Sigma|} |\mathcal C|}{n}.$$
The minimum distance of the code $\mathcal C$ is defined to be the smallest
value $\delta$ of $\Delta( c, c')$ for distinct elements $c, c'$ of $\mathcal C$.

\paragraph{Encoding}
If $\mathcal C \subseteq \Sigma^n$ is a code, an \underline{encoding map}
for $\mathcal C$ is a bijection $E: \Sigma_0^k \to \mathcal C$ for some
integer $k$. Often $\Sigma_0 = \Sigma$, but it need not be.
It will be important that
this map $E$ is efficiently computable and efficiently invertible.

\paragraph{Unique Decoding}
In the problem of {\bf unique decoding} the code $\mathcal C$ from
$\eta$-fraction errors, where $\eta \leq \delta/2$, we are given as
input $r \in \Sigma^n$, and we wish to compute the unique $c \in \mathcal C$
(if any) such that $ \Delta(r,c) < \eta.$ The uniqueness
follows from our condition relating $\eta$ and $\delta$.

\paragraph{List-Decoding}
In the problem of {\bf list-decoding} the code $\mathcal C$ from $\eta$-fraction errors,
we are given as input $r \in \Sigma^n$, and we wish to compute the set
$$\mathcal L = \{ c \in \mathcal C \mid \Delta(r, c) < \eta \}.$$
The maximum possible value of $| \mathcal L|$ as $r$ varies over all elements of $\Sigma^n$
is called the {\bf list-size} for list-decoding $\mathcal C$ from $\eta$ fraction errors.

\paragraph{Local Correction and Local Decoding}

In the problem of {\bf locally correcting} the code $\mathcal C$ from
$\eta$-fraction errors, where $\eta \leq \delta/2$, we are given 
oracle access to a string $r \in \Sigma^n$, and given as input
$i \in [n]$, and we wish to compute $c_i$ for
the unique $c \in \mathcal C$ (if any) such that
$ \Delta(r,c) < \eta.$ The query complexity of such a local
correction algorithm is the number of queries made to $r$; both
the query complexity and time complexity could potentially be
sublinear in $n$ (and indeed this is the interesting case).

For {\bf local decoding}, we deal with a code $\mathcal C$
along with an encoding map $E: \Sigma_0^k \to \mathcal C$. In
the problem of local decoding $(\mathcal C, E)$
from $\eta$-fraction errors, where $\eta \leq \delta/2$,
we are given oracle access to $r \in \Sigma^n$, and input
$i \in [k]$, and we wish to compute $x_i$ for the unique
$x \in \Sigma_0^k$ (if any) such that $\Delta(r, E(x)) < \eta$.
The query complexity of such a local
decoding algorithm is the number of queries made to $r$; again,
both the query complexity and time complexity could potentially be
sublinear in $n$ (and indeed this is the interesting case).

The difference between local decoding and local correction is that
in local decoding, we are trying to recover symbols of the original message,
while in local correction, we are trying to recover symbols of the codeword.

\subsection{Polynomials and Derivatives}

For a vector $\veci = \langle i_1,\ldots,i_m\rangle$
of non-negative integers, its {\em weight}, denoted
$\wt(\veci)$, equals $\sum_{j=1}^m i_j$.

For a field $\F$, let $\F[X_1, \ldots, X_m] = \F[\vecX]$
be the ring of polynomials in the variables $X_1,\ldots,X_m$ with coefficients
in $\F$. For a vector of non-negative integers $\veci = \langle
i_1,\ldots,i_m \rangle$, let $\vecX^\veci$ denote the monomial
$\prod_{j= 1}^m X_{j}^{i_j} \in \F[\vecX]$.

We now define derivatives and the multiplicity of vanishing at a point.

\begin{definition}[(Hasse) Derivative]
For $P(\vec X) \in \F[\vec X]$ and
non-negative vector $\veci$, the $\veci$th {\em (Hasse) derivative} of
$P$, denoted $P^{(\vec i)}(\vec X)$,
is the coefficient of $\mathbf Z^{\veci}$ in the
polynomial $\tilde{P}(\vec X, \vec Z) \eqdef P(\vec X + \vec Z) \in \F[\vecX, \vec Z]$.
\end{definition}

Thus,
\begin{equation}
\label{eq:deridentity}
P(\vec X + \vec Z) = \sum_{\vec i} P^{(\veci)}(\vec X) \vec Z^{\vec i}.
\end{equation}

We will need some basic properties of the Hasse derivative (see~\cite{Hirschfieldbook}).

\begin{proposition}[Basic properties of Hasse derivatives]
\label{prop:prop}
Let $P(\vec X), Q(\vec X) \in \F[\vec X]^m$ and let $\vec i$, $\vec j$ be
vectors of nonnegative integers. Then:
\begin{enumerate}
\item $P^{(\veci)}(\vecX) + Q^{(\veci)}(\vecX) = (P+Q)^{(\veci)}(\vecX)$.
\item $(P \cdot Q)^{(\veci)}(\vecX) = \sum_{0 \leq \bl \leq \veci}P^{(\bl)}(\vecX) \cdot Q^{(\ii - \bl)}(\vecX)$.
\item $\left(P^{(\veci)}\right)^{(\vec j)}(\vecX) = {\veci + \vec j \choose \veci}P^{(\mathbf{i+j})}(\vecX)$.
\end{enumerate}
\end{proposition}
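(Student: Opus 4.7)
The plan is to derive all three identities by manipulating the defining identity~(\ref{eq:deridentity}) and comparing coefficients of monomials in $\vec Z$ (and, for (3), in a second auxiliary vector $\vec W$). The structural fact underlying everything is that the substitution $P(\vecX) \mapsto P(\vecX + \vec Z)$ is a ring homomorphism from $\F[\vecX]$ into $\F[\vecX, \vec Z]$; properties (1) and (2) are essentially immediate consequences of this, combined with coefficient matching.

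Concretely, for (1), since $(P+Q)(\vecX + \vec Z) = P(\vecX + \vec Z) + Q(\vecX + \vec Z)$, expanding both sides using~(\ref{eq:deridentity}) and reading off the coefficient of $\vec Z^{\veci}$ gives additivity. For (2), write $(P \cdot Q)(\vecX + \vec Z) = P(\vecX + \vec Z) \cdot Q(\vecX + \vec Z)$, expand each factor via~(\ref{eq:deridentity}), collect the coefficient of $\vec Z^{\veci}$ from the product (which by definition is a sum over all ways to split $\veci$ as $\bl + (\veci - \bl)$), and equate with $(PQ)^{(\veci)}(\vecX)$.

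For (3), I introduce a fresh vector $\vec W$ and evaluate $P(\vecX + \vec W + \vec Z)$ in two different ways. Grouping as $P\bigl((\vecX + \vec W) + \vec Z\bigr)$ and applying the identity twice (once in $\vec Z$, then expanding each $P^{(\veci)}(\vecX + \vec W)$ in $\vec W$) gives
\[
P(\vecX + \vec W + \vec Z) \;=\; \sum_{\veci,\vec j} \left(P^{(\veci)}\right)^{(\vec j)}(\vecX)\, \vec W^{\vec j} \vec Z^{\veci}.
\]
Grouping instead as $P\bigl(\vecX + (\vec W + \vec Z)\bigr)$, applying~(\ref{eq:deridentity}) once, and expanding $(\vec W + \vec Z)^{\veck}$ by the multinomial theorem gives
\[
P(\vecX + \vec W + \vec Z) \;=\; \sum_{\veck} P^{(\veck)}(\vecX) \sum_{0 \leq \bl \leq \veck} \binom{\veck}{\bl} \vec W^{\bl} \vec Z^{\veck - \bl}.
\]
Matching coefficients of $\vec W^{\vec j} \vec Z^{\veci}$ (so $\bl = \vec j$ and $\veck = \veci + \vec j$) yields $\left(P^{(\veci)}\right)^{(\vec j)}(\vecX) = \binom{\veci + \vec j}{\vec j} P^{(\veci+\vec j)}(\vecX)$, which is the claim after invoking the componentwise symmetry $\binom{\veci+\vec j}{\vec j} = \binom{\veci+\vec j}{\veci}$. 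Nothing here is genuinely difficult; the only real care needed is the index bookkeeping in (3), where the multi-index binomial coefficient $\binom{\veci+\vec j}{\veci} = \prod_k \binom{i_k + j_k}{i_k}$ emerges directly from the multinomial expansion and requires no separate combinatorial argument.
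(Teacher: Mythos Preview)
Your argument is correct and is the standard way to establish these identities from the defining equation~\eqref{eq:deridentity}. Note, however, that the paper does not actually prove this proposition: it is stated with a reference to~\cite{Hirschfieldbook} and no proof is given, so there is nothing in the paper to compare your approach against.
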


\begin{definition}[Multiplicity]
For $P(\vec X) \in \F[\vec X]$ and $\vec a \in \F^m$,
the {\em multiplicity} of $P$ at $\veca \in \F^m$,
denoted $\mult(P,\veca)$, is the largest integer $M$ such that
for every
non-negative vector $\veci$ with $\wt(\veci) < M$, we have
$P^{(\veci)}(\veca) = 0$
(if $M$ may be taken
arbitrarily large, we set $\mult(P, \veca) = \infty$).
\end{definition}

Next, we state a basic bound on the total
number of zeroes (counting multiplicity) that a polynomial can have on a product set $S^m$. An elementary proof of this lemma can be found in~\cite{DKSS}.

\begin{lemma}
\label{lemma:schwartz}
Let $P \in \F[\vecX]$ be a nonzero polynomial of total degree at most $d$.
Then for any finite $S \subseteq \F$,
$$\sum_{\veca \in S^m} \mult(P,\veca) \leq d\cdot |S|^{m-1}.$$
In particular, for any integer $s > 0$,
$$ \Pr_{\veca \in S^m}[\mult(P, \veca) \geq s] \leq \frac{d}{s|S|}.$$
\end{lemma}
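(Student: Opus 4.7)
My plan is to prove Lemma~\ref{lemma:schwartz} by a double induction: outer on the number of variables $m$, inner on the total degree bound $d$. The outer base case $m=1$ is the classical statement that a nonzero univariate polynomial $P \in \F[X]$ of degree at most $d$ satisfies $\sum_{a \in \F} \mult(P, a) \leq d$ (since $(X-a)^{\mult(P,a)}$ divides $P$ for each $a$ and these factors are pairwise coprime), from which the restriction to $a \in S$ is immediate. For the inductive step with $m \geq 2$ I would fix a nonzero $P \in \F[\vecX]$ of total degree at most $d$ and split into two cases, according to whether any $a \in S$ causes the hyperplane restriction $P(X_1,\ldots,X_{m-1}, a)$ to vanish identically.

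In Case~1 (no such $a$), for each $a_m \in S$ the restricted polynomial $Q_{a_m}(X_1,\ldots,X_{m-1}) \eqdef P(X_1,\ldots,X_{m-1},a_m)$ is a nonzero polynomial of total degree at most $d$ in $m-1$ variables. Comparing the Hasse expansions via (\ref{eq:deridentity}), for any multi-index $\veci' = (i_1,\ldots,i_{m-1})$ and the extended $\veci = (\veci', 0)$ one has $P^{(\veci)}(\veca',a_m) = Q_{a_m}^{(\veci')}(\veca')$ at every $\veca' \in \F^{m-1}$, so $\mult(P,(\veca',a_m)) \leq \mult(Q_{a_m},\veca')$. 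Applying the outer inductive hypothesis to each $Q_{a_m}$ and summing over $a_m \in S$ gives the bound
\[
\sum_{\veca \in S^m} \mult(P, \veca) \;\leq\; |S|\cdot d\cdot |S|^{m-2} \;=\; d\cdot |S|^{m-1}.
\]

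In Case~2 (some $a \in S$ makes the restriction vanish), $(X_m - a) \mid P$, so I would factor $P = (X_m - a)P'$ with $P'$ nonzero of total degree $d-1$. The key identity
\[
\mult(P, \veca) \;=\; \mult(X_m - a, \veca) + \mult(P', \veca) \;=\; \mathbf{1}[a_m = a] + \mult(P', \veca)
\]
holds for every $\veca = (\veca', a_m) \in \F^m$; this is the Hasse product rule from Proposition~\ref{prop:prop} sharpened to equality, which is legitimate because $\F[\vec Z]$ is an integral domain and therefore the initial form of a product is the product of initial forms. Summing and invoking the inner inductive hypothesis on $P'$ gives $\sum_{\veca \in S^m} \mult(P, \veca) \leq |S|^{m-1} + (d-1)|S|^{m-1} = d \cdot |S|^{m-1}$. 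The ``in particular'' probability bound follows immediately from Markov's inequality applied to the nonnegative quantity $\mult(P, \veca)$. The single delicate step is the upgrade of the product rule to an equality in Case~2; everything else is a clean bookkeeping induction that I expect to go through without trouble.
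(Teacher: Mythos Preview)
Your argument is correct. The double induction is sound: the base $m=1$ is the usual root-counting with multiplicities, Case~1 uses the identity $Q_{a_m}^{(\veci')}(\veca') = P^{(\veci',0)}(\veca',a_m)$ to get $\mult(P,(\veca',a_m)) \le \mult(Q_{a_m},\veca')$ and then the outer hypothesis, and in Case~2 the equality $\mult(QR,\veca)=\mult(Q,\veca)+\mult(R,\veca)$ is indeed valid over any field because the lowest-degree homogeneous parts of $Q(\veca+\vec Z)$ and $R(\veca+\vec Z)$ multiply to something nonzero in the domain $\F[\vec Z]$. The Markov step for the ``in particular'' is immediate.

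As for comparison with the paper: the paper does not prove this lemma at all --- it simply states it and refers the reader to~\cite{DKSS} for an elementary proof. The argument in~\cite{DKSS} is organized a bit differently (a single induction on $m$, expanding $P$ in powers of $X_m$ and using an inequality of the form $\mult(P,(\veca',a_m)) \le \mult(P_t,\veca') + \mult(P(\veca',X_m),a_m)$ where $P_t$ is the leading $X_m$-coefficient), but your factor-off-a-linear-factor version is equally clean and arguably more transparent about where each $|S|^{m-1}$ comes from. Either route is fine here.
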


\subsection{Multiplicity Codes}

Finally, we come to the definition of multiplicity codes.
\begin{definition}[Multiplicity code~\cite{KSY}]
Let $s,d,m$ be nonnegative integers and let $q$ be a prime power.
Let $\Sigma = \F_q^{m+s -1 \choose m} = \F_q^{\{\ii : \wt(\ii) < s\}}$.
For $P(X_1, \ldots, X_m) \in \F_q[X_1, \ldots, X_m]$, we define
{\em the order $s$ evaluation of $P$ at $\ba$}, denoted $P^{(< s)}(\ba)$,
to be the vector $\langle P^{(\ii)}(\ba)\rangle_{\wt(\ii) < s} \in \Sigma$.

\underline{The multiplicity code of order-$s$ evaluations of degree-$d$ polynomials in $m$ variables
over $\F_q$} is defined as follows.
The code is over the alphabet $\Sigma$, and has length $q^m$ (where the coordinates
are indexed by elements of $\F_q^m$). For each polynomial $P(\vecX) \in \F_q[X_1, \ldots, X_m]$ with
$\deg(P) \leq d$, there is a codeword in $\mathcal C$ given by:
$$\Enc_{s,d, m, q}(P) = \langle P^{(<s)}(\ba) \rangle_{\ba \in \F_q^m} \in (\Sigma)^{q^m}.$$
\end{definition}

Technically speaking, we have only defined the multiplicity
code as a subset of $\Sigma^{\F_q^m}$, without specifying an encoding map.
We postpone the choice of a good encoding map to a later section.

\begin{lemma}[Rate and distance of multiplicity codes~\cite{KSY}]\label{lem:multdist}
Let $\mathcal C$ be the multiplicity code of order $s$ evaluations
of degree $d$ polynomials in $m$ variables over $\F_q$.
Then $\mathcal C$ has minimum distance at least $\delta = 1-\frac{d}{sq}$ and
rate $\frac{{d+m \choose m}}{{s + m - 1 \choose m}q^m}$, which is at least
$$\left(\frac{s}{m+s}\right)^m \cdot \left(\frac{d}{sq}\right)^m
\geq \left(1- \frac{m^2}{s}\right)\left(1 - \delta\right)^m.$$
\end{lemma}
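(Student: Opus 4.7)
The plan is to handle the three parts of the statement in turn: the distance lower bound via the Schwartz-type lemma just stated, the exact rate formula via a direct dimension count, and the final simplification via elementary estimates on binomial coefficients.

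For the distance, I would take two distinct polynomials $P,Q$ of degree at most $d$ and set $R = P - Q$, a nonzero polynomial of degree $\leq d$. The codewords $\Enc_{s,d,m,q}(P)$ and $\Enc_{s,d,m,q}(Q)$ disagree at exactly those $\veca \in \F_q^m$ where some derivative $R^{(\veci)}(\veca)$ with $\wt(\veci) < s$ is nonzero, i.e.\ where $\mult(R,\veca) < s$. Thus their agreement set is $\{\veca : \mult(R,\veca)\ge s\}$, which by Lemma~\ref{lemma:schwartz} applied with $S = \F_q$ has size at most $\frac{d}{sq}\cdot q^m$. Normalizing by $q^m$ gives agreement fraction $\le d/(sq)$, i.e.\ distance $\ge 1 - d/(sq)$. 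A nice side effect: since $d < sq$ (in the nontrivial regime), the argument also shows that $\Enc_{s,d,m,q}$ is injective, which is what we need to compute the rate.

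For the rate, the message space is the $\F_q$-vector space of polynomials in $m$ variables of total degree $\le d$, which has dimension $\binom{d+m}{m}$, and by injectivity $|\mathcal C| = q^{\binom{d+m}{m}}$. The alphabet $\Sigma = \F_q^{\binom{s+m-1}{m}}$ has $|\Sigma| = q^{\binom{s+m-1}{m}}$, and the code has length $n = q^m$. So
\[
R \;=\; \frac{\log_{|\Sigma|}|\mathcal C|}{n} \;=\; \frac{\binom{d+m}{m}}{\binom{s+m-1}{m}\,q^m},
\]
which is exactly the claimed formula.

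For the final lower bound, I would write $\binom{d+m}{m} = \frac{(d+1)(d+2)\cdots(d+m)}{m!} \ge \frac{d^m}{m!}$ and $\binom{s+m-1}{m} = \frac{s(s+1)\cdots(s+m-1)}{m!} \le \frac{(s+m-1)^m}{m!} \le \frac{(s+m)^m}{m!}$. Substituting gives
\[
R \;\ge\; \frac{d^m}{(s+m)^m\,q^m} \;=\; \left(\frac{s}{m+s}\right)^m\!\left(\frac{d}{sq}\right)^m,
\]
and since $\frac{d}{sq} = 1 - \delta$, it remains to show $\left(\frac{s}{m+s}\right)^m \ge 1 - m^2/s$. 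This follows from Bernoulli's inequality: $\left(1 - \frac{m}{m+s}\right)^m \ge 1 - \frac{m^2}{m+s} \ge 1 - \frac{m^2}{s}$.

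Honestly, I don't expect a real obstacle here: the Schwartz-type bound does all the work for distance, and the rate is just a dimension count followed by elementary binomial estimates. The only place to be careful is ensuring that the estimates $\binom{s+m-1}{m} \le (s+m)^m/m!$ and $\binom{d+m}{m} \ge d^m/m!$ go in the correct directions to produce the claimed $(s/(m+s))^m (d/(sq))^m$ bound rather than something weaker; bookkeeping with the factor $m!$ that cancels is the one spot to double-check.
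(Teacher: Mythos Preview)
Your proof is correct and is exactly the standard argument (the paper itself does not prove this lemma, merely citing \cite{KSY}): the distance follows from the multiplicity Schwartz--Zippel bound (Lemma~\ref{lemma:schwartz}), the rate from the dimension count, and the simplification from the stated binomial estimates together with Bernoulli's inequality. There is nothing to add.
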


We usually think of $m$ and $s$ as large constants (significantly
smaller than $q$), and in light of the above parameters, having 
$s \gg m^2$ is particularly interesting. For the rest of this paper,
when we speak of near-linear time algorithms, this assumes that
$m$ and $s$ are constants, and that $q$ and the blocklength $q^m$ tend
to $\infty$.

One can easily convert such codes into codes over a constant sized (and even binary) alphabet via concatenation, while preserving the local decoding/correction properties. For details, see~\cite{KSY}.

\section{Decoding Univariate Multiplicity Codes}
\label{sec:decoding1}

We begin by discussing decoding of univariate multiplicity codes.

\subsection{Unique Decoding}

The classic Berlekamp-Welch algorithm for decoding Reed-Solomon codes
up to half the minimum distance has a simple generalization to 
the case of univariate multiplicity codes. This generalization was first
discovered by Nielsen~\cite{Nielsen}\footnote{Nielsen's theorem analyzes
the decoding radius in terms of the $m$-metric, and implies the decoding
algorithms for the Hamming metric considered here.}.
In fact, Nielsen showed how to do list-decoding  of univariate multiplicity codes, discussed next.

Let us set the problem up. Recall that the alphabet for this code is $\F_q^s$.
Thus the received word is a function $r: \F_q \to \F_q^{s}$. 
Abusing notation, we view this as a tuple of $s$ functions 
$r^{(i)}: \F_q \to \F_q$ for $0 \leq i < s$.
We wish to find
the unique $P(X)$ such that $\Delta( \Enc_{s,d,1,q}(P), r) < \delta/2$.

The algorithm tries to find an error-locator polynomial $E(X)$ and
another polynomial $N(X)$, such that $N(X) = E(X) \cdot P(X)$.
\begin{itemize}
\item Search for nonzero polynomials $E(X)$, $N(X)$ of degrees at most $(sq - d)/2$, $(sq + d)/2$ respectively such that
for each $x \in \F_q$, we have the following equations:
\begin{align}
N(x) &=  E(x) r^{(0)}(x)  \nonumber \\
N^{(1)}(x) &= E(x) r^{(1)}(x) + E^{(1)}(x) r^{(0)}(x)   \nonumber \\
\label{eqbw}
& \cdots  \\
N^{(s-1)}(x) &= \sum_{i=0}^{s-1} E^{(i)}(x) r^{(s-1-i)}(x) \nonumber
\end{align}
This is a collection of $sq$ homogeneous linear equations in $(sq-d)/2 + 1 + (sq + d)/2 + 1 > sq$ unknowns (the coefficients of $E$ and $N$).
Thus a nonzero solution $E(X), N(X)$ exists. Take any such nonzero solution.
\item Given $E(X)$, $N(X)$ as above, output $\frac{N(X)}{E(X)}$.
\end{itemize}

The analysis proceeds by showing that $N(X) - P(X) E(X)$, 
which is a degree $(sq + d)/2$ polynomial, has $> (sq + d)/2s$ 
zeroes of multiplicity $\geq s$, and is thus the zero polynomial.
This implies that $P(X) = N(X) / E(X)$, and so $P(X)$ is the output of
the algorithm, as desired.

\subsubsection{Unique decoding in near-linear time}

In this subsection we describe how to implement the above algorithm
in near-linear time. The presentation follows the description
of a near-linear time implementation of the Berlekamp-Welch algorithm
in Sudan's lecture notes~\cite{S-lintime}.

Let $R(X)$ be the unique polynomial of degree at most $sq-1$
such that for each $\alpha \in \F_q$ and $i < s$,
$$R^{(<s)}(\alpha) = r^{(<s)}(\alpha).$$
Such an $R(X)$ can be found in near-linear time 
by the classical Hermite interpolation algorithm of Chin~\cite{Chin}.
If $E(X)$ and $N(X)$ satisfy the equations~\eqref{eqbw},
then we have that $N(X) - E(X) R(X)$ vanishes at each
$x \in \F_q$ with multiplicity at least $s$.
Thus:
$$ N(X) = E(X) R(X) - C(X) \cdot (X^q - X)^s,$$
for some $C(X) \in \F_q[X]$.
Equivalently, 
$$ \frac{N(X)}{E(X)(X^q - X)^s} = \frac{R(X)}{(X^q - X)^s}  - \frac{C(X)}{E(X)}.$$

Thus we are looking for $C(X), E(X)$ such that:
\begin{enumerate}
\item $\deg(E(X)) \leq (sq-d)/2$,
\item the rational function $\frac{C(X)}{E(X)}$ approximates the rational function $\frac{R(X)}{(X^q - X)^s}$, in the sense that the numerator of their difference $N(X) = R(X) E(X) - C(X) (X^q - X)^s$
has degree at most $(sq+d)/2$.
\end{enumerate}
This problem can be solved in near-linear time via Strassen's
continued fraction algorithm~\cite{strassen-continued-fraction}. In fact, one can minimize the degree
of $N(X)$ subject to the constraint that $\deg(E(X)) \leq (sq- d)/2$.

Finally, the division step can also be performed in near-linear time.
This completes the description of the near-linear time implementation
of the unique decoder for univariate multiplicity codes.

\subsection{List-Decoding}

We now discuss the list-decoding of univariate multiplicity codes.
Here we consider the problem of decoding from a fraction of errors
which may be larger than half the minimum distance $\delta$.

By the Johnson bound, we know that for list-decoding
univariate multiplicity codes from $(1 - \sqrt{1 - \delta})$-fraction
errors, the list-size is at most $\poly(q)$ (this only uses
the fact that the distance of the code is $\geq \delta$).
It is thus reasonable to ask whether there is a polynomial time algorithm
to list-decode univariate multiplicity codes from  $(1 - \sqrt{1 - \delta})$-fraction error.

In~\cite{Nielsen}, Nielsen gave such an algorithm. His algorithm generalizes
the Guruswami-Sudan algorithm for list-decoding Reed-Solomon codes,
and is also based on  interpolation and root-finding.

Given a received word $r : \F_q \to \F_q^s$, one first
interpolates a low-degree bivariate polynomial $Q(X,Y) \in \F_q[X, Y]$
such that for each $\alpha \in \F_q$, the polynomial
$Q(X, \sum_{j=0}^{s-1} r^{(j)}(\alpha) (X-\alpha)^{j} )$
vanishes with high multiplicity at $X = \alpha$.
One then shows that every $P(X) \in \F_q[X]$ of degree at most $d$ with 
$\Delta( \Enc_{s,d,1,q}(P), r) \leq 1 - \sqrt{1-\delta}$, we have
$Q(X, P(X)) = 0$. Finally, one can find all polynomials
$P(X)$ satisfying this latter equation.

Recently Guruswami-Wang~\cite{GW} and Kopparty~\cite{K} independently found
improved results for list-decoding univariate multiplicity
codes over {\em prime fields}.

The main result of~\cite{GW}
is that order $s$ univariate multiplicity codes of distance $\delta$
over prime fields can, for every integer $0 \leq t < s$,
be list-decoded from $\eta_t$ fraction errors with list-size
at most $q^{O(s)}$, where:
$$\eta_t =  \frac{t+1}{t+2} \left(\delta - \frac{t}{s - t} \right).$$
For $t = 0$, the algorithm boils down to Nielsen's version of the Berlekamp-Welch
algorithm for unique-decoding multiplicity codes.

The main result of~\cite{K}
is that order $s$ univariate multiplicity codes of distance $\delta$
over prime fields can, for every integer $0 \leq t < s$
be list-decoded from $\eta'_t$ fraction errors with list-size
at most $q^{O(ts)}$, where:
$$\eta_t'= 1 - \left(\left(1 - \frac{t}{s-t}\right) \cdot (1-\delta) \right)^{\frac{t+1}{t+2}}.$$
For $t = 0$, the algorithm boils down to Nielsen's version of the Guruswami-Sudan algorithm for list-decoding univariate multiplicity codes.

Both these algorithms are based on
deriving an order $t$ differential equation of the form:
$$Q(X, P(X), P^{(1)}(X), \ldots, P^{(t-1)}(X)) = 0$$
from the received
word $r$, such that every $P$ whose encoding is close to $r$
must satisfy this differential equation. In the algorithm of~\cite{GW}
this differential equation is a linear differential equation,
and in the algorithm of~\cite{K} this equation is a polynomial differential
equation. These differential equations are then solved using
Hensel-lifting / power series. See~\cite{GW} and~\cite{K} for the details.
The decoding radius $\eta_r'$ is always greater than $\eta_r$,
but the algorithm and analysis of~\cite{K} are also more involved than 
that of~\cite{GW}.

It is well known that the maximimum fraction of errors $\eta$
from which a code of rate $R$ and block-length $n$
can be list-decoded from while still having  $\poly(n)$ list-size
is $1 - R -\epsilon$ (for arbitrarily small $\epsilon > 0$).
A code which achieves this is said to achieve list-decoding capacity.
The first constructions of codes which achieved list-decoding capacity
came from the breakthrough results of Parvaresh-Vardy~\cite{PV} and
Guruswami-Rudra~\cite{GR}. 
The above-mentioned results of~\cite{GW} and~\cite{K}
show that univariate multiplicity codes
over prime fields achieve list-decoding capacity for every $R \in (0,1)$.
This follows by noting that for univariate multiplicity codes, 
$R = 1 - \delta$, and that for every $\delta$, if we take $r$ to be
a very large constant, and $s$ to be a much larger constant, then the
above decoding radii $\eta_r$ and $\eta_{r'}$ approach
$\delta = 1 - R$.

\section{Decoding Multivariate Multiplicity Codes}
\label{sec:decoding2}

\subsection{Local Correction}

We begin by discussing local correction algorithms for multiplicity codes.
When coupled with a {\em systematic } encoding map (which we discuss in the next section), this also gives local decoding algorithms for multiplicity codes.

\subsubsection{Preliminaries on Restrictions and derivatives}

We first consider the relationship between the derivatives of a
multivariate polynomial $P$ and its restrictions to a line.
Fix $\ba, \bb \in \F_q^m$, and consider the polynomial $Q(T) = P(\ba + \bb T)$.
\begin{itemize}
\item
{\bf The relationship of $Q(T)$ with the derivatives of $P$ at $\ba$:}
By the definition of Hasse derivatives,
$$Q(T) = \sum_{\ii} P^{(\ii)}(\ba) \bb^\ii T^{\wt(\ii)}.$$
Grouping terms, we see that:
\begin{align}
\label{eq:restrict-derive}
\sum_{\ii \mid \wt(\ii) = j} P^{(\ii)}(\ba) \bb^{\ii} = \mbox{coefficient of $T^j$ in $Q(T)$.}
\end{align}
\item
{\bf The relationship of the derivatives of $Q$ at $t$ with the derivatives of $P$ at $\ba + t \bb$:} Let $t \in \F_q$.
By the definition of Hasse derivatives, we get the following two identities:
$$P(\ba + \bb(t+R)) = Q(t + R) = \sum_j Q^{(j)}(t) R^j.$$
$$P(\ba+\bb(t + R)) = \sum_{\ii} P^{(\ii)}(\ba + \bb t)(\bb R)^{\ii}.$$
Thus,
\begin{align}
\label{eq:restrict}
Q^{(j)}(t) = \sum_{\ii \mid \wt(\ii) = j} P^{(\ii)}(\ba + \bb t) \bb^{\ii}.
\end{align}
In particular, $Q^{(j)}(t)$ is simply a linear combination of the various
$P^{(\ii)}(\ba + \bb t)$ (over different $\ii$).
\end{itemize}

We now apply these observations to the derivatives of $P$.
For each nonnegative tuple $\bl \in \Z^m$, consider the polynomial $Q_{\bl}(T) = P^{(\bl)}(\ba + \bb T)$.
\begin{itemize}
\item
{\bf The relationship of $Q_{\bl}(T)$ with the derivatives of $P$ at $\ba$:}
\begin{align}
\label{eq:restrict-derive-bl}
\sum_{\ii \mid \wt(\ii) = j} (P^{(\bl)})^{(\ii)}(\ba) \bb^{\ii} = \sum_{\ii \mid \wt(\ii) = j} {\bl + \ii \choose \bl } P^{(\bl + \ii)}(\ba) \bb^{\ii} = \mbox{coefficient of $T^j$ in $Q_{\bl}(T)$.}
\end{align}
In particular, knowing $Q_{\bl}(T)$ gives us several linear relations between the evaluations of the derivatives of $P$ at $\ba$.
\item
{\bf The relationship of the derivatives of $Q_{\bl}$ at $t$ with the derivatives of $P$ at $\ba + t \bb$:} Let $t \in \F_q$.
We get
\begin{align}
\label{eq:restrict-bl}
Q_{\bl}^{(j)}(t) = \sum_{\ii \mid \wt(\ii) = j} (P^{(\bl)})^{(\ii)}(\ba + \bb t) \bb^{\ii} =  \sum_{\ii \mid \wt(\ii) = j} {\bl + \ii \choose \bl} P^{(\bl +\ii)}(\ba + \bb t) \bb^{\ii}  .
\end{align}
In particular, $Q_{\bl}^{(j)}(t)$ is simply a linear combination of evaluations, at $\ba + \bb t$, of the various derivatives of
$P$.
\end{itemize}

\subsubsection{The Local Correction Algorithm}
\label{sec:local}

We now give our local correction algorithm which corrects
$\delta_0 < \frac{\delta}{8}$ fraction errors.
The $\gamma= 0$, $c = 1$ case of this algorithm
is the orignal local correction algorithm of~\cite{KSY}. Increasing $\gamma$
reduces the query complexity from $s^{O(m)}$ to $\exp(m)$, while
reducing the fraction of correctable errors by a negligible amount.

\noindent{\bf Main Local Correction Algorithm:}\\
\noindent{\bf Input:} received word $r: \F_q^m \rightarrow \Sigma$, point $\ba \in \F_q^m$.
Abusing notation again, we will write $r^{(\ii)}(\ba)$ when we mean the $\ii$ coordinate of $r(\ba)$.
\begin{enumerate}
\item Set $\gamma = 1 - \frac{(1-\delta)}{1 - 8\delta_0} = \frac{\delta - 8\delta_0}{1 - 8\delta_0}$. Set $c = \gamma \cdot s + 1.$


\item  {\bf Pick a set $B$ of directions:} 
Pick $\bz, \by_1, \by_2, \ldots \by_m \in \F_q^m$ independently and uniformly at random. Let $S \subset \F_q$ be any set of size $\lceil\frac{5s}{c}\rceil$.
Define
$$ B = \{ \bz + \sum_{j=1}^m \alpha_j \by_j \mid \alpha_j \in S \}.$$



\item {\bf Recover $P^{(\bl)}(\ba + \bb T)$ for directions $\bb \in B$:} For each $\bl$ with $\wt(\bl) < c$ and each $\bb \in B$, consider the function $\ell_{\bb, \bl} : \F_q \rightarrow \F_q^{s - \wt(\bl)}$ given by
\begin{align}
\label{eq:decode1}
(\ell_{\bb, \bl}(t))_j =  \sum_{\ii \mid \wt(\ii) = j} {\bl + \ii \choose \bl } r^{(\bl + \ii)}(\ba+\bb t) \bb^\ii,
\end{align}
for each $0 \leq j < s - \wt(\bl)$.
Via a univariate multiplicity code decoding algorithm,
find the unique polynomial $Q_{\bb, \bl}(T) \in \F_q[T]$ of degree at most $d-\wt(\bl)$ (if any), such that
$$\Delta(\Enc_{s - \wt(\bl), d - \wt(\bl), 1, q}(Q_{\bb, \bl}), \ell_{\bb, \bl}) < 2 \delta_0.$$

\item {\bf Decode a constant degree multiplicity code to recover $P^{(<s)}(\ba)$:}
Denote the coefficient of $T^j$ in $Q_{\bb, \bl}(T)$ by $v_{j, \bb, \bl} \in \F_q$. If $j < 0$, we define $v_{j, \bb, \bl} = 0$.

For each $j'$ with $0 \leq j' < s$, find the unique homogeneous 
degree $j'$ polynomial $R_{j'}(\vecX) \in \F_q[\vecX]$ such that
for at least $1/3$ of the $\bb \in B$, for all $\bl$ with $\wt(\bl) < c$,
we have:
$$ R_{j'}^{(\bl)}(\bb) = v_{j' - \wt(\bl), \bb, \bl}.$$
Note that this is a constant degree multiplicity code decoding problem.

If such an $R_{j'}$ does not exist, or is not unique, the algorithm outputs FAIL.

For each $\ii$ with $\wt(\ii) < s$, define $u_{\ii}$ to equal the coefficient
of $\vecX^{\ii}$ in $R_{\wt(\ii)}(\vecX)$.


\item Output the vector $\langle u_{\ii} \rangle_{\wt(\ii) < s}$.
\end{enumerate}

We quickly comment on the running time and query complexity.
The running time consists
of $|S|^m$ instances of decoding univariate multiplicity codes over $\F_q$,
as well as on instance of decoding a degree-$s$ $m$-variate order-$c$
multivariate multiplicity code with evaluation points being $S^m$. 
Thus, if $m, s$ are constant, the running time is near-linear in $q$,
which is near-linear in $n^{1/m}$, where $n$ is the block-length of the code.
The query complexity is $|S|^m \cdot q$, which equals $(\frac{5}{\gamma})^m \cdot n^{1/m}$. For $\delta = \Omega(1)$ and $\delta_0 < \delta / 10$ (say),
the query complexity equals $\exp(m) \cdot n^{1/m}$.

\subsubsection{Analysis of the Local Correction Algorithm}

We now analyze the above local correction algorithm.
\begin{theorem}
Let $P(\vecX) \in \F_q[\vecX]$ be such that $\Delta(\Enc_{s,d,m,q}(P), r) < \delta_0.$
Let $\ba \in \F_q^m$.

With high probability, the local correction algorithm above outputs $P^{(<s)}(\ba)$.
\end{theorem}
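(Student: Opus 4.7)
My plan is to classify each direction $\bb \in B$ as \emph{good} if the fraction of $t \in \F_q$ with $r(\ba + \bb t) \ne P^{(<s)}(\ba + \bb t)$ is strictly less than $2\delta_0$, and then to show that (i) with high probability over $\bz, \by_1, \ldots, \by_m$ a substantial fraction of $B$ is good, and (ii) on every good $\bb$ the values $v_{j, \bb, \bl}$ computed by step 3 match the corresponding derivatives of $P$'s Taylor expansion around $\ba$, so that step 4 reconstructs $P^{(<s)}(\ba)$ uniquely.

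For (i), the key combinatorial fact is that the random set $B$ is pairwise uniform: for any fixed $\alpha \in S^m$, $\bb_\alpha = \bz + \sum_j \alpha_j \by_j$ is uniform on $\F_q^m$ (since $\bz$ is uniform), and for any two distinct $\alpha, \alpha' \in S^m$ the pair $(\bb_\alpha, \bb_{\alpha'})$ is jointly uniform on $(\F_q^m)^2$ (because the differences $\alpha_j - \alpha'_j$ are nonzero in at least one coordinate, making $\bb_\alpha - \bb_{\alpha'}$ uniform). Combined with the fact that the expected error rate on a uniformly random line through $\ba$ is at most $\delta_0 + 1/q$, Markov's inequality gives $\Pr[\bb \text{ bad}] \le \tfrac12$, and then pairwise independence plus Chebyshev show that with probability $1 - O(1/|B|)$ the good fraction concentrates around its mean and exceeds the $\tfrac13$ threshold used in step 4.

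For (ii), I would start from identity~\eqref{eq:restrict-bl}, which says that on a good $\bb$ the order-$(s-\wt(\bl))$ encoding of $Q_{\bb,\bl}(T) := P^{(\bl)}(\ba + \bb T)$ agrees with $\ell_{\bb, \bl}$ at all uncorrupted values of $t$. A short calculation using $\gamma = (\delta - 8\delta_0)/(1 - 8\delta_0)$ and $c = \gamma s + 1$ shows that for every $\wt(\bl) < c$ the univariate multiplicity code of order $s-\wt(\bl)$ and degree $d-\wt(\bl)$ has relative distance at least roughly $8\delta_0$, so $2\delta_0$ lies safely within its unique-decoding radius and the univariate subroutine recovers $Q_{\bb, \bl}$ exactly. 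Reading off its coefficient of $T^{j'-\wt(\bl)}$ via \eqref{eq:restrict-derive-bl} and comparing with the direct expansion of the $\bl$-th Hasse derivative of $R_{j'}(\vec X) := \sum_{\wt(\vec k) = j'} P^{(\vec k)}(\ba)\,\vec X^{\vec k}$ at $\bb$ gives the equality $v_{j'-\wt(\bl), \bb, \bl} = R_{j'}^{(\bl)}(\bb)$ on every good $\bb$, so this $R_{j'}$ already satisfies the step-4 condition.

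The main obstacle is the uniqueness part of (ii). If some rival $R'_{j'} \ne R_{j'}$ also matches $v$ on at least $|B|/3$ of the $\bb$'s, then on the intersection of that set with the good set the nonzero polynomial $D := R_{j'} - R'_{j'}$ of degree $< s$ must vanish to multiplicity at least $c$, and transporting Lemma~\ref{lemma:schwartz} through the affine identification of $B$ with $S^m$ bounds the density of such $\bb$'s by $j'/(c|S|) < 1/5$ because $|S| \ge 5s/c$. The delicate part will be balancing this ``$1/5$'' from Schwartz-Zippel against the good-fraction bound from (i) and the $\tfrac13$ agreement threshold so as to force a contradiction; the specific choices $|S| = \lceil 5s/c \rceil$ and $c = \gamma s + 1$ are engineered precisely to make this balance go through. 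Once uniqueness is established, assembling the coefficients of the $R_{j'}$ for $j' = 0, \ldots, s-1$ reproduces $u_\ii = P^{(\ii)}(\ba)$ for every $\wt(\ii) < s$, which is the claimed output.
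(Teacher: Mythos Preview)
Your plan follows the paper's proof essentially step for step: classify directions as good/bad by the error rate on the line through $\ba$, concentrate the good fraction inside $B$, recover $Q_{\bb,\bl}(T)=P^{(\bl)}(\ba+\bb T)$ exactly on good lines via the univariate distance bound, check that the ``true'' homogeneous polynomial $R_{j'}$ meets the step~4 agreement condition, and derive uniqueness from the multiplicity Schwartz--Zippel lemma transported through the affine bijection $B\cong S^m$. Your explicit pairwise-independence/Chebyshev argument for the concentration claim fills in exactly what the paper omits.

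The one place where your numbers diverge from the paper is the ``bad'' threshold: you set it at error-fraction $2\delta_0$, whereas the paper uses $4\delta_0$. With your threshold Markov only gives $\Pr[\bb\text{ bad}]\lesssim\tfrac12$, so after concentration the good fraction $g$ sits near $\tfrac12$. But your uniqueness argument needs the intersection of the good set (on which $R_{j'}$ matches $v$) with a rival's $\ge\tfrac13$-agreement set to have density exceeding the Schwartz--Zippel bound $\tfrac{s}{c|S|}\le\tfrac15$; inclusion--exclusion gives that intersection density $\ge g+\tfrac13-1=g-\tfrac23$, which is negative for $g\approx\tfrac12$, so no contradiction follows. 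The paper's choice $4\delta_0$ gives $\Pr[\text{bad}]\le\tfrac14$ and hence good fraction $\ge\tfrac23$ after concentration, and then asserts (a bit loosely, without isolating the intersection you correctly identify) that the difference polynomial vanishes to order $c$ on $\ge\tfrac13$ of $B$. So the balance you rightly flag as ``delicate'' genuinely requires the larger bad-threshold; with $2\delta_0$ it does not close.
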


\begin{proof}
Let $E = \{x \in \F_q^m \mid P^{(<s)}(x) \neq r^{(<s)}(x) \}$ be the error set.
We have $|E| < \delta_0 \cdot q^m$.

Let $L_{\bb} =  \{\ba + t \bb \mid t \in \F_q \}$ be the line through $\ba$ in direction $\bb$.
We call $\bb$ bad if $|L_{\bb} \cap E| \geq 4 \cdot \delta_0 \cdot q.$

Note that at most $1/4$ of all the lines are bad.
\begin{claim}
With high probability, we have:
\begin{enumerate}
\item at most $1/3$ of the $\bb \in B$ are bad,
\item $|B| = |S|^m$,
\end{enumerate}
\end{claim}
These basic probability/linear-algebra facts are well known, and we omit the proofs.


Henceforth we assume that both these events happen.

\begin{claim}
\label{goodq}
If $\bb$ is good, then for every $\bl$ with $\wt(\bl) < c$, we have:
$$Q_{\bb, \bl}(T) = P^{(\bl)}(\ba + \bb T).$$
\end{claim}
\begin{proof}
The univariate multiplicity code of order $s-\wt(\bl)$ evaluations of degree $d - \wt(\bl)$ polynomials has 
minimum distance at least $1 - \frac{d}{(s-c+1) q} = 1 -\frac{1-\delta}{1 -\gamma}$ which, by choice of $\gamma$, is $\geq 8 \cdot \delta_0$.

If $\bb$ is good, then we know that
$|L_{\bb} \cap E| < 4 \cdot \delta_0 \cdot q$.
By Equations~\eqref{eq:decode1} and~\eqref{eq:restrict-bl}, we conclude that
$P^{(\bl)}(\ba + \bb T)$ (which has degree $d - \wt(\bl)$) satisfies:
$$\Delta(\Enc_{s-\wt(\bl), d-\wt(\bl), 1, q}(P^{(\bl)}(\ba + \bb T)), \ell_{\bb, \bl}) \leq \frac{|L_{\bb} \cap E|}{q} < 4 \cdot \delta_0,$$
which is less than half the minimum distance of the  univariate multiplicity code of order $s-\wt(\bl)$ evaluations of degree $d - \wt(\bl)$ polynomials.

Thus $P^{(\bl)}(\ba + \bb T)$ is the unique such polynomial found in Step 3, and so $Q_{\bb, \bl}(T) = P^{(\bl)}(\ba + \bb T)$.
\end{proof}

For each integer $0 \leq j' < s$, define the polynomial:
$$\tilde{R}_{j'}(\vecX) = \sum_{\ii' \mid \wt(\ii') = j'} P^{(\ii')}(\ba) \vecX^{\ii'} .$$

\begin{claim}
If $\bb$ is good, then for all $\bl$ with $\wt(\bl) < c$, we have:
$$\tilde{R}_{j'}^{(\bl)}(\bb) = v_{j' - \wt(\bl), \bb, \bl},$$
\end{claim}
\begin{proof}
We have:
\begin{align*}
\tilde{R}_{j'}^{(\bl)} ( \vecX )
&= \sum_{\ii' \mid \wt(\ii') = j'} {\ii' \choose \bl}  P^{(\ii')}(\ba) \vecX^{\ii'-\bl}\\
&= \sum_{\ii' \mid \wt(\ii') = j', \ii' \geq \bl} {\ii' \choose \bl}  P^{(\ii')}(\ba) \vecX^{\ii'-\bl}\\
&= \sum_{\ii \mid \wt(\ii) = j} {\bl + \ii \choose \bl}  P^{(\bl + \ii)}(\ba) \vecX^{\ii},
\end{align*}
where $j = j' - \wt(\bl)$.

Thus,
\begin{align*}
\tilde{R}_{j'}^{(\bl)} ( \bb)
&= \sum_{\ii \mid \wt(\ii) = j} {\bl + \ii \choose \bl}  P^{(\bl + \ii)}(\ba) \bb^{\ii}\\
&= \mbox{coeff. of $T^j$ in $Q_{\bb, \bl}(T)$ \quad \quad (by Equation~\eqref{eq:restrict-derive-bl} and Claim~\ref{goodq}, since $\bb$ is good)}\\
&= v_{j' - \wt(\bl), \bb, \bl}.
\end{align*}
\end{proof}

Thus $\tilde{R}_{j'}(\vecX)$ satisfies the conditions required of Step 4 of the algorithm.

Let us now show that no other polynomial can satisfy these conditions.
Suppose there was some other solution $\overline{R}_{j'}(\vecX)$. Then the difference
$(\tilde{R}_{j'} - \overline{R}_{j'})(\vecX)$ would be a nonzero polynomial of degree $< s$, that
vanishes with multiplicity at least $c$, at $\geq \frac{1}{3}$ of the points of $B$.
But this cannot be, since $B$ is an affine one-to-one image of the set
$S^m$, and the fraction of points of $S^m$ on which a nonzero polynomial
of degree $<s$ can vanish with multiplicity $\geq c$ is at most 
$\frac{s}{c|S|} = \frac{1}{5} < \frac{1}{3}$.
Thus $\tilde{R}_{j'}$ is the unique solution found in Step 4.

Finally, we notice that our definition of $R_{j'}$ implies that 
for every $\ii$, we have $u_{\ii} = P^{(\ii)}(\ba)$, as desired.
\end{proof}

\subsubsection{Variations}
\label{sec:var}

The above algorithm allows a number of variations that may be useful in different
contexts.

Let $\ba \in \F_q^m$. Suppose $r : \F_q^m \to \Sigma$ is a received word, 
and suppose $P(\vecX) \in \F_q[\vecX]$ is a polynomial of degree at most
$d$ such that $\Delta( \Enc_{s,d,m,q}(P), r) < \delta_0$.
Let $\gamma = \frac{\delta - 8 \delta_0}{1-8\delta_0}$, and let 
$ c= \gamma s + 1$.

Let $\ba \in \F_q^m$.
For each integer $0 \leq j' < s$, define the polynomial:
$$\tilde{R}_{j'}(\vecX) = \sum_{\ii' \mid \wt(\ii') = j'} P^{(\ii')}(\ba) \vecX^{\ii'} .$$
Suppose $\bb \in \F_q^m$ is good (meaning that
the line $L_{\ba, \bb}$ has $< 4 \delta_0 q$ errors on it).
As we saw in the above analysis, by querying all the points of the 
line $L_{\ba, \bb}$, we can compute  $\tilde{R}_{j'}^{(\bl)}(\bb)$,
for every $j' < s$ and every $\bl$ such that $\wt(\bl) < c$.

\begin{enumerate}
\item Suppose we are only interested in recovering 
$P^{(< c)}(\ba)$. Then this can be recovered by querying the points of
{ just one line!} Indeed, if we pick $\bb$ at random,
then with high probability $\bb$ is good, and then by 
querying $L_{\ba, \bb}$ we can compute 
 $\tilde{R}_{j'}^{(\bl)}(\bb)$ 
for every $j' < s$ and every $\bl$ such that $\wt(\bl) < c$.
Note that for every $\bl$ with $\wt(\bl) < c$, we have
$\tilde{R}_{\wt(\bl)}^{(\bl)}(\bb) = P^{(\bl)}(\bb)$.
Thus we can compute $P^{(<c)}(\ba)$ with high probability.

We now describe a coding scheme taking advantage of this.
Specializing parameters, if we take $\delta = 1/2$, 
$\delta_0 = 1/100$, $s = \Omega(m^2)$, $d = (1-\delta) \cdot s \cdot q$,
 $n = q^m$, $\gamma = \frac{1}{5}$, and so $c > \frac{s}{5}$.
Let $\Sigma_0 = \F_q^{ \{ \ii \mid \wt(\ii) < c \} }$,
and let our space of messages be the space of all functions
$f: \F_q^m \to \Sigma_0$. 

Define the encoding of the message $f$ as follows:
find any polynomial $P(\vecX) \in \F_q[\vecX]$ of degree at most $(c + m)\cdot q$
such that for each $\ba \in \F_q^m$, we have
$P^{(< c)}(\ba) = f(\ba)$ (that such a polynomial $P$ exists
is an interpolatability statement; it follows from the arguments
in Appendix A of~\cite{K}). By choice of parameters,
$(c+m) \cdot q < d$. The encoding of 
$f$ is then defined to be $\Enc_{s,d,m,q}(P) \in \Sigma^{n}$.
Note that $\log|\Sigma| = {m+s-1 \choose m} \cdot \log q$, and 
$\log|\Sigma_0| = {m + c -1 \choose m} \cdot \log q \approx \gamma^m \cdot \log|\Sigma|$.
Thus this encoding blows up the bit-length of an alphabet symbol 
by a factor of $5^m$, which is at most a sublinear polynomial in $n$
for $q > 5$.


By the earlier discussion on local correction,
this coding scheme has the following interesting property:
given any $\ba \in \F_q^m$, given oracle access to some 
$r \in \Sigma^m$ which is $\delta_0$-close to the encoding of 
$f$, one can recover the value of $f(\ba)$ with high probability
using only $q$ queries into $r$. This scheme can be used
even for $q$ as small as $O(1)$!

\item We now describe another local correction algorithm
for multiplicity codes. This algorithm queries only $m$ lines, but
it only works for multiplicity codes of low rate.

To locally correct the value of $P^{(<s)} (\ba)$ for
$\ba \in \F_q^m$, given oracle access to $r : \F_q^m \to \Sigma$,
the algorithm works as follows. First pick a set $B$
of $m$ uniformly random directions from $\F_q^m$. With high probability,
we will have that $B$ will be a set of $m$ linearly independent vectors,
and if the fraction of errors $\delta_0$ is sufficiently small, then
all the $\bb \in B$ will be good.
This means that we can compute $\tilde{R}_{j'}^{(\bl)}(\bb)$
for each $\bb \in B$, $j' < s$ and $\bl$ with $\wt(\bl) < c$.

The following lemma implies that this data uniquely determines
the polynomial $\tilde{R}_{j'}(\vecX)$ for each $j' < c' = c \cdot \frac{m}{m-1}$.

\begin{lemma}
\label{lem:joints1}
Let $c' = c \cdot \frac{m}{m-1}$.
Let $R(X_1, \ldots, X_m)$ be a homogeneous polynomial of degree $j' < c'$.
Suppose $B$ is a set of $m$ linearly independent vectors in $\F_q^m$,
such that $\mult(R, \bb) \geq c$ for each $\bb \in B$.

Then $R(X_1, \ldots, X_m) = 0$.
\end{lemma}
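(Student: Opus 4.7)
The plan is to reduce via an invertible linear change of variables to the case $B = \{e_1,\ldots,e_m\}$, then show that the vanishing conditions at each basis vector kill so many coefficients of $R$ that the degree bound $j' < c'$ forces $R \equiv 0$. The key bookkeeping is that the hypothesis $\mult(R,e_k) \geq c$ will pin down the $k$-th coordinate of any $\ii$ with $a_\ii \neq 0$; running this over all $k$ and summing will clash with the degree bound.

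Since $B$ is linearly independent, let $A$ be the $m \times m$ invertible matrix with $A e_i = \bb_i$ and set $\tilde R(\vec Y) = R(A \vec Y)$, which is still homogeneous of degree $j'$. Writing $\tilde R(\vec a + \vec Z) = R(A\vec a + A \vec Z)$ and substituting $\vec W = A\vec Z$, the vanishing of every monomial of degree $<c$ in $\vec Z$ inside $\tilde R(\vec a + \vec Z)$ is equivalent to the vanishing of every monomial of degree $<c$ in $\vec W$ inside $R(A\vec a + \vec W)$, since an invertible linear change preserves total degree. Hence $\mult(\tilde R, e_i) = \mult(R, \bb_i) \geq c$, and showing $\tilde R \equiv 0$ suffices; so WLOG $\bb_i = e_i$.

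Now write $R(\vec X) = \sum_{\wt(\ii) = j'} a_\ii \vec X^\ii$. For each $k$, I expand
\[ R(e_k + \vec Z) \;=\; \sum_{\ii} a_\ii \Bigl(\prod_{j \neq k} Z_j^{i_j}\Bigr) (1 + Z_k)^{i_k}, \]
and read off the coefficient of a monomial $\prod_{j \neq k} Z_j^{l_j}$ (i.e., the $Z_k$-degree-zero part). Exactly one $\ii$ with $\wt(\ii) = j'$ contributes, namely $\ii^* = (l_1,\ldots,l_{k-1},\, j'-\sum_{j \neq k} l_j,\, l_{k+1},\ldots,l_m)$, and the binomial factor $\binom{i_k}{0} = 1$ makes its contribution exactly $a_{\ii^*}$. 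The hypothesis $\mult(R,e_k) \geq c$ kills this whenever $\sum_{j \neq k} l_j < c$, i.e., whenever $i_k = j' - \sum_{j \neq k} l_j > j' - c$; so $a_\ii \neq 0$ implies $i_k \leq j' - c$.

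Running this over $k = 1,\ldots,m$ and summing yields $j' = \sum_k i_k \leq m(j' - c)$, i.e., $j' \geq mc/(m-1) = c'$, contradicting $j' < c'$; so every $a_\ii$ vanishes and $R \equiv 0$ (the case $j' < c$ is subsumed since then $i_k \leq j' - c < 0$ is vacuous). The only real subtlety is checking that multiplicity in the Hasse sense is preserved under invertible linear substitution — a standard fact, but worth the short direct verification above; the monomial extraction itself is clean precisely because restricting to $l_k = 0$ isolates a single unknown per equation, sidestepping any further linear algebra on the system of constraints.
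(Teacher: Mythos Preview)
Your proof is correct and follows essentially the same line as the paper's: reduce by an invertible linear change of variables to $B = \{e_1,\ldots,e_m\}$, observe that $\mult(R,e_k) \geq c$ forces $a_{\ii} = 0$ whenever $i_k > j' - c$, and then use the degree constraint $\wt(\ii) = j'$ to rule out any surviving monomial. The only cosmetic difference is that the paper phrases the final step via pigeonhole (some $i_k \geq j'/m > j'-c$) while you sum the inequalities $i_k \leq j'-c$ over $k$; these are contrapositives of one another.
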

\begin{proof}
Without loss of generality, we may assume $B = \{\bu_1, \ldots, \bu_m\}$,
where  $\bu_i \in \F_q^m$ is $1$ in coordinate $i$ and $0$ in every
other coordinate.  The hypothesis $\mult(R, \bu_i) \geq c$ implies
that for every $\ii$ with $\wt(\ii) = j'$ and $\ii_i > j' - c$,
the coefficient of $\vecX^{\ii}$ in $R(\vecX)$ is $0$.

Finally, notice that for every $\ii$ with $\wt(\ii) = j'$,
there always exists some $i \in [m]$ for which $\ii_i > \frac{j'}{m} > j'-c$.

Thus $R(\vecX) = 0$.
\end{proof}

Once we have computed  $\tilde{R}_{j'}(\vecX)$ for each $j' < c' = c \cdot \frac{m}{m-1}$,
this immediately gives us $P^{(< c')}(\ba)$. If $c' \geq s$, 
then this is proper local correction algorithm, but even if $c' < s$ this
algorithm could be of interest.

For completeness, we record the basic multiplicity amplification fact underlying the above decoding algorithm.
This can also be proved using Lemma~\ref{lem:joints1}.

\begin{lemma}
\label{lem:joints2}
Let $c' = c \cdot \frac{m}{m-1}$.

Let $\ba \in \F^m$.
Let $P(X_1, \ldots, X_n) \in \F[X_1, \ldots, X_m]$, and let $B \subseteq \F_q^m$ be a basis for $\F_q^m$ over $\F_q$.
For each $\bb \in B$ and each $m$-tuple $\bl$ of nonnegative integers,
define $Q_{\bb, \bl}(T) = P^{(\bl)}(\ba + \bb T) \in \F[T]$.

Suppose $Q_{\bb, \bl}(T) = 0$ for each $\bb \in B$ and each $\bl$ with $\wt(\bl) < c$.
Then $\mult(P, \ba) \geq c'$.
\end{lemma}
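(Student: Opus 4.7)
The plan is to reduce Lemma~\ref{lem:joints2} to Lemma~\ref{lem:joints1} by slicing $P$ into homogeneous pieces around $\ba$ and reading the hypothesis one slice at a time. First I would translate: let $f(\vec Z) = P(\ba + \vec Z)$, so that $\mult(P, \ba)$ is just the order of vanishing of $f$ at the origin. Write $f = \sum_{j \geq 0} f_j$ where $f_j$ is the homogeneous component of degree $j$. The goal is to prove $f_j \equiv 0$ for every $j < c'$.

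Next I would rewrite the hypothesis $Q_{\bb,\bl}(T) = 0$ in terms of the $f_j$. Since Hasse derivatives commute with translation, $Q_{\bb,\bl}(T) = P^{(\bl)}(\ba + \bb T) = f^{(\bl)}(\bb T)$. Splitting into homogeneous pieces gives the clean expansion
\begin{equation*}
Q_{\bb,\bl}(T) \;=\; \sum_{j \geq \wt(\bl)} (f_j)^{(\bl)}(\bb)\, T^{\,j - \wt(\bl)},
\end{equation*}
where each $(f_j)^{(\bl)}$ is itself homogeneous of degree $j - \wt(\bl)$. The key point is that the powers of $T$ appearing are all distinct, so vanishing of $Q_{\bb,\bl}(T)$ as a polynomial in $T$ is equivalent to the vanishing of every coefficient $(f_j)^{(\bl)}(\bb)$ separately. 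Quantifying over all $\bl$ with $\wt(\bl) < c$, this says exactly that $\mult(f_j, \bb) \geq c$ for every $j$ and every $\bb \in B$.

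Now I invoke Lemma~\ref{lem:joints1} on each $f_j$: it is homogeneous, it has multiplicity at least $c$ at every element of $B$, and $B$ is a basis of $\F_q^m$ and hence a set of $m$ linearly independent vectors. The lemma therefore forces $f_j \equiv 0$ whenever $j < c' = c \cdot \frac{m}{m-1}$. Summing over $j < c'$ gives $f = \sum_{j \geq c'} f_j$, i.e.\ $f$ vanishes at the origin with multiplicity at least $c'$, which is the desired conclusion $\mult(P, \ba) \geq c'$.

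There is no real obstacle here beyond bookkeeping; the only subtle step is recognizing that the single scalar condition $Q_{\bb,\bl}(T) = 0$ really packages $\mult(f_j,\bb) \geq c$ for every degree $j$ at once, which is what lets the homogeneous-only statement in Lemma~\ref{lem:joints1} do all the work. Once that observation is in hand, the proof is essentially a one-line application of the previous lemma to each homogeneous component.
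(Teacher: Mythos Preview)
Your proof is correct and follows exactly the approach the paper indicates: the paper does not spell out a proof of Lemma~\ref{lem:joints2} but simply remarks that it ``can also be proved using Lemma~\ref{lem:joints1},'' and your argument does precisely this. Your homogeneous components $f_j$ coincide with the polynomials $\tilde{R}_{j'}(\vecX) = \sum_{\wt(\ii')=j'} P^{(\ii')}(\ba)\vecX^{\ii'}$ that the paper already introduced earlier in Section~\ref{sec:var}, and your observation that $Q_{\bb,\bl}(T)=0$ forces $(f_j)^{(\bl)}(\bb)=0$ for every $j$ is exactly the content of Equation~\eqref{eq:restrict-derive-bl} read in reverse; from there Lemma~\ref{lem:joints1} finishes the job.
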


\end{enumerate}

\subsection{Global Decoding}

We now consider decoding of multivariate multiplicity codes
in the global sense. In the case of standard polynomial codes
(the $s = 1$ case), the best known algorithm (due
to Pellikaan-Wu~\cite{PW}) for decoding $m$-variate codes over $\F_q$ works
via a reduction to the decoding of $1$-variate codes over the
bigger field $\F_{q^m}$. 

For multiplicity codes with general $s$, Kopparty~\cite{K} gave a reduction from
$m$-variate codes over $\F_q$ to {\em several instances} of decoding a $1$-variate code over the big field $\F_{q^m}$.
Using the algorithms for decoding univariate multiplicity codes discussed earlier,
this gives polynomial time algorithms for unique decoding multivariate multiplicity
codes upto half the minimum distance, and list-decoding multivariate multiplicity
codes upto the Johnson bound.

Below we describe a variation of the reduction of~\cite{K}.
The key ingredient of that reduction is the construction of a certain special
family of ``algebraically-repelling" curves.

Abusing notation, we call $\fraka \in \F_{q^m}^m$ a basis if its 
$m$ coordinates form a basis for $\F_{q^m}$ over $\F_q$.
To every basis $\fraka = (a_1, \ldots, a_m)$, we associate a curve
$\gamma_{\fraka}(T) \in \F_{q^m}[T]^m$, given by:
$$\gamma_{\fraka}(T) = (\Tr(a_1 T), \Tr(a_2 T), \ldots, \Tr(a_m T)).$$
The most interesting feature of this curve is that
$\gamma_{\fraka}$ is a bijection between $\F_{q^m}$ and $\F_q^m$.
See~\cite{K} for more properties of these curves $\gamma_{\fraka}$.

A collection $\fraka_1, \ldots, \fraka_M \in \F_{q^m}^m$ of bases
is said to be in $(s, c)$-general position if there does not
exist a nonzero polynomial $R(\vecX) \in \F_{q^m}[\vecX]$ of degree at most
$s$ which vanishes at each $\fraka_i$ with multiplicity at least $c$.

The $c = 1$ case of the following lemma was shown in~\cite{K}.
\begin{lemma}
\label{lem:zerozero}
Suppose $c \leq s < q$.
Let $\fraka_1, \ldots, \fraka_M \in \F_{q^m}^m$ be bases in $(s,c)$ general position.
Let $Q(\vecX) \in \F_{q^m}[\vecX]$ have degree $< s \cdot q$.
Suppose that for each $i \in [M]$ and each $\bl$ with $\wt(\bl) < c$,
the univariate polynomial
$Q^{(\bl)} \circ \gamma_{\fraka_i}(T) = 0$.

Then $Q(\vecX) = 0$.
\end{lemma}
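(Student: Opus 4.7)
The plan is to derive from the hypotheses that $Q$ vanishes on $\F_q^m$ with multiplicity at least $s+1$ at \emph{every} point, and then invoke Lemma~\ref{lemma:schwartz} to conclude $Q = 0$. The crucial structural feature is that each $\gamma_{\fraka}$ is an \emph{additive} polynomial, i.e.\ $\gamma_{\fraka}(T+t) = \gamma_{\fraka}(T) + \gamma_{\fraka}(t)$, which makes the Taylor expansion of $Q^{(\bl)} \circ \gamma_{\fraka}$ around an arbitrary point $t$ as clean as the expansion around $0$.

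First I would fix an arbitrary point $\ba \in \F_q^m$. Since $\fraka_i$ is a basis, $\gamma_{\fraka_i} : \F_{q^m} \to \F_q^m$ is a bijection, so let $t_i$ be the unique preimage of $\ba$. Using additivity and the Hasse-derivative identity~\eqref{eq:deridentity},
\[
0 \;=\; Q^{(\bl)}\bigl(\gamma_{\fraka_i}(T)\bigr) \;=\; Q^{(\bl)}\bigl(\ba + \gamma_{\fraka_i}(T - t_i)\bigr) \;=\; \sum_{\ii} \bigl(Q^{(\bl)}\bigr)^{(\ii)}(\ba)\, \gamma_{\fraka_i}(T - t_i)^{\ii}.
\]
Each coordinate of $\gamma_{\fraka_i}(S)$ is the $q$-linearized polynomial $a_{k,i} S + a_{k,i}^q S^q + \cdots$, so $\gamma_{\fraka_i}(S)^{\ii}$ has lowest-degree term $\fraka_i^{\ii}\, S^{\wt(\ii)}$ and its next smallest exponent in $S$ is at least $\wt(\ii) + q - 1$. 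Hence for every $0 \leq n < q$, only the tuples $\ii$ with $\wt(\ii) = n$ contribute to the coefficient of $(T - t_i)^n$, and using $(Q^{(\bl)})^{(\ii)} = \binom{\bl+\ii}{\bl}\, Q^{(\bl+\ii)}$ we obtain
\[
\sum_{\wt(\ii) = n}\binom{\bl+\ii}{\bl}\, Q^{(\bl+\ii)}(\ba)\, \fraka_i^{\ii} \;=\; 0.
\]

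Next, for each $0 \leq n' \leq s$ I would introduce the homogeneous test polynomial $\tilde R_{n'}(\vecX) = \sum_{\wt(\bj) = n'} Q^{(\bj)}(\ba)\,\vecX^{\bj}$ of degree $n'$. Substituting $\bj = \bl + \ii$ shows that $\tilde R_{n'}^{(\bl)}(\fraka_i)$ equals precisely the sum above with $n = n' - \wt(\bl)$. Because $s < q$, we have $n < q$ automatically for every $\bl$ with $\wt(\bl) < c$ and $\wt(\bl) \leq n'$; the remaining case $\wt(\bl) > n'$ makes $\tilde R_{n'}^{(\bl)}$ identically zero since $\deg \tilde R_{n'} = n'$. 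Thus $\tilde R_{n'}^{(\bl)}(\fraka_i) = 0$ for every $\bl$ with $\wt(\bl) < c$ and every $i$, so $\tilde R_{n'}$ is a polynomial of degree $\leq s$ vanishing to multiplicity $\geq c$ at every $\fraka_i$. The $(s,c)$-general position hypothesis then forces $\tilde R_{n'} = 0$, and therefore $Q^{(\bj)}(\ba) = 0$ for every $\bj$ with $\wt(\bj) \leq s$.

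Since $\ba \in \F_q^m$ was arbitrary, $\mult(Q, \ba) \geq s+1$ for every $\ba \in \F_q^m$. If $Q$ were nonzero, Lemma~\ref{lemma:schwartz} with $S = \F_q$ would yield $(s+1)\, q^m \leq (\deg Q) \cdot q^{m-1} < sq \cdot q^{m-1} = s\, q^m$, a contradiction, so $Q = 0$. The main technical point in the plan is the sharp ``low-degree'' Taylor estimate in the second paragraph: this genuinely uses the additivity of $\gamma_{\fraka_i}$ (to reduce any expansion point to the origin) together with the hypothesis $s < q$ (so that all relevant Taylor exponents $n$ lie below $q$, keeping the expansion unambiguous and allowing us to read off a polynomial of degree at most $s$ to which general position applies).
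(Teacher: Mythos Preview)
Your argument is correct and follows essentially the same route as the paper's proof: fix $\ba\in\F_q^m$, use additivity of $\gamma_{\fraka_i}$ to read off, for each $j<q$, the vanishing of $\sum_{\wt(\ii)=j}\binom{\bl+\ii}{\bl}Q^{(\bl+\ii)}(\ba)\,\fraka_i^{\ii}$, package these as $\tilde R_{n'}^{(\bl)}(\fraka_i)=0$, apply $(s,c)$-general position to kill $\tilde R_{n'}$, and finish with Lemma~\ref{lemma:schwartz}. The only cosmetic differences are that the paper isolates your Taylor-coefficient computation as a separate statement (Lemma~\ref{lem:multcompose}) rather than doing it inline, and the paper stops at $n'<s$ (giving $\mult(Q,\ba)\geq s$) whereas you push to $n'\leq s$; either suffices since $\deg Q < sq$.
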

The proof of this lemma is postponed to Section~\ref{sec:repel}.

Explicit collections $\fraka_1, \ldots, \fraka_M \in \F_{q^m}^m$
in $(s,c)$ general position with $M = \left(\frac{s}{c}\right)^m$
can be constructed as follows.
Take $\fraka$ to be a basis of $\F_{q^m}^m$, and 
$\bb_1, \ldots, \bb_M$ to be all the 
elements of a $m$-dimensional grid of side $\frac{s}{c}$ in $\F_q^m$,
and set $\fraka_i = \fraka + \bb_i$.

Via the lemma, this gives us an explicit collection of $\left(\frac{s}{c} \right)^m$
``algebraically-repelling" curves. We now show how these curves 
can be used for reducing multivariate decoding to univariate decoding.
Again, this generalizes the $\gamma = 0$, $c = 1$ case, which was done in~\cite{K}.

The following lemma, relating the derivatives of a multivariate polynomial to the derivatives of its restriction to the curve $\gamma_{\fraka}$, will motivate one of the steps of the algorithm.
\begin{lemma}
\label{lem:multcompose}
Let $P(X_1, \ldots, X_m) \in \F_{q}[X_1, \ldots, X_m]$ and let $Q_{\bl}(T) \in \F_{q^m}[T]$ be given by $Q_{\bl}(T) = P^{(\bl)} \circ \gamma_{\fraka}(T)$.
Then for every $t \in \F_{q^m}$ and every $j < q$:
$$ Q_{ \bl}^{(j)}(t) = \sum_{\ii : \wt(\ii) = j} {\ii + \bl \choose \ii} P^{(\ii+\bl)}(\gamma_{\fraka}(t)) \fraka^{\ii}.$$
\end{lemma}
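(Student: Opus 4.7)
The plan is to mimic the proof of the line-restriction identity \eqref{eq:restrict-bl}, but exploit the defining feature of $\gamma_{\fraka}$: the trace function is $\F_q$-linear, and moreover the nonlinear part of each coordinate only shows up at degrees $q$ or higher in $R$. The restriction $j < q$ in the statement is exactly what makes these higher-Frobenius contributions invisible.

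First I would use linearity of trace to write
\[
\gamma_{\fraka}(t+R) = \gamma_{\fraka}(t) + V(R), \quad\text{where } V(R) = \bigl(\Tr(a_1 R),\ldots,\Tr(a_m R)\bigr) \in \F_{q^m}[R]^m.
\]
Then by the definition of the Hasse derivative applied to $P^{(\bl)}$ at the point $\gamma_{\fraka}(t)$,
\[
Q_{\bl}(t+R) = P^{(\bl)}\bigl(\gamma_{\fraka}(t) + V(R)\bigr) = \sum_{\ii} \bigl(P^{(\bl)}\bigr)^{(\ii)}\!(\gamma_{\fraka}(t)) \, V(R)^{\ii},
\]
and Proposition~\ref{prop:prop}(3) rewrites the inner derivative as $\binom{\bl+\ii}{\bl} P^{(\bl+\ii)}$. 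By definition of $Q_{\bl}^{(j)}(t)$ as the coefficient of $R^j$ in $Q_{\bl}(t+R)$, the lemma reduces to showing that, for $j < q$, the coefficient of $R^j$ in $V(R)^{\ii}$ equals $\fraka^{\ii}$ if $\wt(\ii) = j$ and $0$ otherwise.

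The main obstacle — and really the only content — is this last claim, which is where the Frobenius tails have to be controlled. I would argue as follows. Each coordinate is $\Tr(a_k R) = a_k R + a_k^q R^{q} + \cdots + a_k^{q^{m-1}} R^{q^{m-1}}$; its lowest-degree monomial in $R$ is $a_k R$, and every other monomial has $R$-degree at least $q$. Writing
\[
V(R)^{\ii} = \prod_{k=1}^m \bigl(\Tr(a_k R)\bigr)^{i_k}
\]
and expanding, every monomial in $R$ that appears is obtained by selecting one term from each of the $\sum_k i_k = \wt(\ii)$ factors. If every selection picks the linear term $a_k R$, the contribution is $\fraka^{\ii} R^{\wt(\ii)}$; if at least one factor contributes a higher term $a_k^{q^\ell} R^{q^\ell}$ with $\ell \geq 1$, the total $R$-degree is at least $(\wt(\ii)-1) + q \geq q$. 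Hence below degree $q$ the only surviving monomial is $\fraka^{\ii} R^{\wt(\ii)}$, which gives the claim.

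Assembling these pieces, the coefficient of $R^j$ (for $j < q$) in $Q_{\bl}(t+R)$ is precisely
\[
\sum_{\ii : \wt(\ii) = j} \binom{\ii+\bl}{\ii} P^{(\ii+\bl)}(\gamma_{\fraka}(t))\,\fraka^{\ii},
\]
which is the desired formula for $Q_{\bl}^{(j)}(t)$. The only subtle point is the degree accounting in the Frobenius expansion of $V(R)^{\ii}$; everything else is a direct application of Proposition~\ref{prop:prop} and the definitions.
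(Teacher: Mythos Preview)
Your proof is correct and follows essentially the same approach as the paper: both use the additivity $\gamma_{\fraka}(t+W) = \gamma_{\fraka}(t) + \gamma_{\fraka}(W)$, expand $P^{(\bl)}$ via its Hasse derivatives at $\gamma_{\fraka}(t)$, and then observe that modulo $W^q$ one has $(\gamma_{\fraka}(W))^{\ii} \equiv (\fraka W)^{\ii}$. Your explicit expansion of the trace as $a_k R + a_k^q R^q + \cdots$ and the degree-counting argument for $V(R)^{\ii}$ is exactly what justifies the paper's ``take this equation mod $W^q$'' step, just spelled out in more detail.
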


{\bf Algorithm for Reducing Multivariate Decoding to Univariate Decoding}
\begin{enumerate}
\item Suppose we have an algorithm $\mathcal A$ that list-decodes
univariate multiplicity codes of distance $\delta$ from $\eta(\delta)$-fraction
errors.
\item Let $\gamma = \frac{\delta- \eta^{-1}(\delta_0)}{1- \eta^{-1}(\delta_0)}$, and let $c = \gamma \cdot s + 1$.
\item Let $M = (\frac{s}{c})^m$.
Pick bases $\fraka_1, \fraka_2, \ldots, \fraka_M \in \F_{q^m}^m$ in $(s,c)$-general position.
\item For each $i \in [M]$, for each $\bl$ with $\wt(\bl) < c$, define $\ell_{i,\bl} : \F_{q^m} \rightarrow \F_{q^m}^{s-c+1}$ as follows. For each $j$ with $0 \leq j < s-c+1$, let 
$$ (\ell_{i, \bl}(t))_j = \sum_{\ii: \wt(\ii) = j} {\ii + \bl \choose \bl} r^{(\ii + \bl)}(\gamma_{\fraka_{i}}(t)) \cdot \fraka_i^{\ii}.$$
\item Using algorithm $\mathcal A$, compute the set $\mathcal L_{i, \bl}$ of all $Q(T) \in \F_{q^m}[T]$ of degree at most $dq^{m-1}$ such that $\Delta( \Enc_{s-c+1,dq^{m-1},1,q^m}(Q), \ell_{i,\bl}) < \delta_0$.
\item For every $(Q_1(T), Q_2(T), \ldots, Q_M(T)) \in \prod_{i=1}^M \mathcal L_i$,
find all $P(X_1, \ldots, X_m) \in \F_q[X_1, \ldots, X_m]$ with $\deg(P) \leq d$
such that for each $i \in [M]$,
$$P\circ \gamma_{\fraka_i} (T) = Q_i(T).$$
(This is a system of linear equations over $\F_{q^m}$ with ${ d+m \choose m}$ variables and $(d+1) \cdot M$ constraints).
\item Output the list of all such $P(X_1, \ldots, X_m)$.
\end{enumerate}

Before analyzing correctness, let us comment on the running
time of this reduction.
For constant $m, s$, we claim that this reduction can be
implemented to run in near-linear time.
The running time of this reduction depends the size 
$\mathcal L_i$, and on how quickly
we can find $P(X_1, \ldots, X_m)$ given 
$Q_1(T), \ldots, Q_M(T)$. If we happen to know that
the $\mathcal L_i$ are all of constant size (as it happens
whenever the fraction of errors is bounded below the
Johnson radius), then the factor of $\prod | \mathcal L_i |$
is at most a constant.
Thus we only need to show that
the latter step can be implemented in near-linear time.
Given $Q_1(T), \ldots, Q_M(T)$, we can: (1) evaluate each
$Q_i(T)$ and each of its derivatives of order $<s$ at all
points of $\F_q^m$, (2) via Lemma~\ref{lem:multcompose}, we know that
this suffices to give us $P^{(<s)}(\ba)$ for each $\ba \in \F_q^m$,
(3) to interpolate $P(\vecX)$ given $P^{(<s)}(\ba)$ for 
each $\ba \in \F_q^m$.

Thus, we get that unique decoding of multivariate multiplicity codes
upto half the minimum distance can be done in near-linear time.
Furthermore, if Nielsen's algorithm for list-decoding univariate multiplicity
codes upto the Johnson bound can be implemented in near-linear time,
then one can also list-decode multivariate multiplicity codes upto (almost) the
Johnson bound in near-linear time.

\subsubsection{Analysis of the decoding algorithm}

Suppose $P(\vecX) \in \F_q[\vecX]$ is such that
$\Delta(\Enc_{s,d,m,q}(P), r) < \delta_0$.
Let $E \subseteq \F_q^m$ be the set of $\ba \in \F_q^m$
where $P^{(<s)}(\ba)$ differs from $r^{(<s})(\ba)$.

We first show that $Q_{i, \bl}(T) := P^{(\bl)} \circ \gamma_{\fraka_i}(T) \in \mathcal L_{i, \bl}$. Indeed, by Lemma~\ref{lem:multcompose}, for every 
$t \in \F_{q^m}$ such that $\gamma_{\fraka} (t) \not\in E$,
and for every $j < s-c+1$, we have:
\begin{align*}
Q_{i, \bl}^{(j)}(t)  &=
\sum_{\ii : \wt(\ii) = j} {\ii + \bl \choose \ii} P^{(\ii+\bl)}(\gamma_{\fraka_i}(t)) \fraka_i^{\ii}\\
&= \sum_{\ii : \wt(\ii) = j} {\ii + \bl \choose \ii} r^{(\ii+\bl)}(\gamma_{\fraka_i}(t)) \fraka_i^{\ii}\\
&= \sum_{\ii : \wt(\ii) = j} {\ii + \bl \choose \ii} r^{(\ii+\bl)}(\gamma_{\fraka_i}(t)) \fraka_i^{\ii}\\
&= \ell_{i, \bl}(t).
\end{align*}
Thus $\Delta(\Enc_{s-c+1,d,1,q}(Q_{i, \bl}), \ell_{i, \bl}) \leq \frac{|E|}{q^m} < \delta_0$, and thus $Q_{i, \bl}$ is indeed included in the list $\mathcal L_i$.
The crucial points here are  (a) the
relative distance of the univariate multiplicity code of
order $s-c+1$ evaluations of degree $d$ polynomials has minimum
distance $1 - \frac{1 - \delta}{1 - \gamma}$, and (b) our choice of $\gamma$ ensures that
$\delta_0 \leq \eta( 1 - \frac{1-\delta}{1-\gamma})$.
Thus algorithm $\mathcal A$ is indeed capable of finding $\mathcal L_i$ as 
required by the reduction.

\subsubsection{Algebraically repelling curves}
\label{sec:repel}

In this section we  prove Lemma~\ref{lem:multcompose} and Lemma~\ref{lem:zerozero}.

\begin{proofof}{Lemma \ref{lem:multcompose}}
By definition of derivatives, we have:
$$Q_{\bl}(t + W) = \sum_{j} Q_{\bl}^{(j)}(t) W^j,$$
$$P^{(\bl)}(\gamma_{\fraka}(t) + \vecX) = \sum_{\ii}  (P^{(\bl)})^{(\ii)}(\gamma_{\fraka}(t))\vecX^{\ii} = \sum_{\ii} {\bl + \ii \choose \ii}  P^{(\bl+\ii)}(\gamma_{\fraka}(t))\vecX^{\ii}.$$

By linearity, $\gamma_{\fraka}(t + W) = \gamma_{\fraka}(t) + \gamma_{\fraka}(W)$.
So 
\begin{align*}
Q_{\bl}(t + W) &= P^{(\bl)}\circ \gamma_{\fraka}(t+W)\\
&= P^{(\bl)}(\gamma_{\fraka}(t) + \gamma_{\fraka}(W))\\
&= \sum_{\ii} {\bl + \ii \choose \ii} P^{(\bl+\ii)}(\gamma_{\fraka}(t))(\gamma_{\fraka}(W))^{\ii} \\
\end{align*}

Taking this equation mod $W^q$, we get the following equation:
\begin{align*}
\sum_{j < q} Q_{\bl}^{(j)}(t) W^j &= \sum_{\ii : \wt(\ii) < q} {\bl + \ii \choose \ii} P^{(\bl + \ii)}(\gamma_{\fraka}(t))(\fraka W)^{\ii}  \quad \mod W^q
\end{align*}
For $j < q$, note that the coefficient of $W^j$ in the right hand side of this
equation equals $\sum_{\ii: \wt(\ii) = j} {\bl + \ii \choose \ii} P^{(\bl + \ii)}(\gamma_{\fraka}(t)) \fraka^{\ii}$.
On the other hand, the coefficient of $W^j$ in the left hand side of the equation equals $Q_{\bl}^{(j)}(t)$.
We therefore conclude that for each $j$ with $0 \leq j < q$, we have
$$Q_{\bl}^{(j)} (t) = \sum_{\ii: \wt(\ii) = j} {\bl + \ii \choose \ii} P^{(\bl + \ii)}(\gamma_{\fraka}(t))\fraka^{\ii}.$$
\end{proofof}

\begin{proofof}{Lemma~\ref{lem:zerozero}}
We will show that for each $\ba \in \F_{q}^m$, $\mult(P, \mathbf a) \geq s$.
Then by Lemma~\ref{lemma:schwartz} (recalling that $\deg(P) < sq$), we
can conclude that $P(X_1, \ldots, X_m) = 0$.

Fix $i \in [M]$ and $\bl$ with $\wt(\bl) < c$.
We have $P^{(\bl)} \circ \gamma_{\fraka_i}(T) = 0$.
By Lemma~\ref{lem:multcompose}, we conclude that for every $t \in \F_{q^m}$ and
$j < q$:
$$\sum_{\ii: \wt(\ii) = j}{\bl + \ii \choose \ii} P^{(\bl+\ii)}(\gamma_{\fraka_i}(t))\fraka_i^{\ii} = 0.$$

Thus for every $\ba \in \F_q^m$, every $i \in [M]$, $0 \leq j < s$, and $\bl$ with
$\wt(\bl) < c$:
\begin{align}
\label{eqvanzero}
\sum_{\ii: \wt(\ii) = j} {\bl + \ii \choose \ii} P^{(\bl+\ii)}(\ba)\fraka_i^{\ii} = 0.
\end{align}





For $0 \leq j' < q$, let $R_{\ba, j'}(\vecY) \in \F_q[\vecY]$ be the polynomial
$$ R_{\ba, j'}(\vecY) = \sum_{\ii' : \wt(\ii') = j'} P^{(\ii')}(\ba) \vecY^{\ii'}.$$
Then the derivatives of $R_{\ba, j'}$ are given by:
\begin{align*}
R_{\ba, j'}^{(\bl)}(\vecY)
&= \sum_{\ii':\wt(\ii') = j'} {\ii' \choose \bl} P^{(\ii')}(\ba) \vecY^{\ii' - \bl}\\
&= \sum_{\ii:\wt(\ii) = j'-\wt(\bl)} {\ii + \bl \choose \bl} P^{(\ii + \bl)}(\ba) \vecY^{\ii}\\
\end{align*}

Equation~\eqref{eqvanzero} says that for each $\ba \in \F_q^m$, $i \in [M]$, $\bl$ with $\wt(\bl) < c$, and $j' < q + \wt(\bl)$, 
we have
$$R_{\ba, j'}^{(\bl)}(\fraka_i) = 0.$$

Thus for every $j < q$ and $\ba, i, \bl$ as above:
$$\mult(R_{\ba, j}, \fraka_i) \geq c.$$


By the general position hypothesis on $\fraka_i$, this implies that for each $j < s$, the
polynomial $R_{\ba, j}(\vecY)$ is itself identically $0$. 

But the coefficients of 
$R_{\ba, j}(\vecY)$ are $P^{(\ii)}(\ba)$, for $\ii$ satisfying $\wt(\ii) = j$. 
Thus $P^{(\ii)}(\ba) = 0$ for each $\ba$ and each $\ii$ with $\wt(\ii) < s$.

Therefore $\mult(P, \ba) \geq s$ for each $\ba \in \F_q^m$, which
implies that $P = 0$, as desired.
\end{proofof}

\subsection{Local List-Decoding}

The list-decoding algorithm for multivariate multiplicity codes
from the Johnson radius can be used to give a
{\em local list-decoding} algorithm for multivariate multiplicity
codes upto the Johnson radius. Since the definition of local
list-decoding is somewhat involved, we refer the reader to
the appendix of~\cite{K} for the algorithm and its analysis.


\section{Encoding}
\label{sec:encoding}

In this section we discuss encoding algorithms.

Since multiplicity codes are $\F_q$-linear subspaces
of $\Sigma^{\F_q^m}$, it is natural to choose an encoding map
which is $\F_q$-linear. One very natural encoding map to consider
is the map:
$$ E: \F_q^{d + m \choose m } \to \mathcal C,$$
which treats its input as a vector of coefficients of
monomials for a polynomial $P \in \F_q[\vecX]$ of degree at most $d$,
and outputs $\Enc_{s,d,m,q}(P)$. It is well known that the
task of computing $E$, namely evaluating a given polynomial
and all its derivatives of order at most $s$ at all points of $\F_q^m$
can be performed in near-linear time $O( (d^m + q^m) \cdot {m+s \choose m} \cdot \log(d^m + q^m))$.

However, for the purposes of local decoding, it will be important
to choose the encoding map $E: \Sigma_0^k \to \mathcal C \subseteq \Sigma^n$
a bit more carefully. The goal
is to have the encoding be {\em systematic}; i.e., to have the 
symbols of the message appear as symbols (or parts of symbols) 
of its encoding. Once we have such an encoding map, a local correction
algorithm immediately gives us a local decoding algorithm.

Such a systematic encoding map can be chosen by giving an 
{\em interpolating set}. Concretely, for a given $s, d, m, q$, we
want a set $S \subseteq \F_q^m \times \{ \ii \mid \wt(\ii) < s\}$
such that for every $f: S \rightarrow \F_q$, there is {\bf exactly one} $P(X_1, \ldots, X_m) \in \F_q[X_1, \ldots, X_m]$
of degree at most $d$ such that for each $(\ba, \ii) \in S$, $P^{(\ii)}(\ba) = f(\ba, \ii)$.

It is easy to see that such sets $S$ exist,
and any such set $S$ must have $|S| = {d + m \choose m}$. 
In order for the so obtained local decoding algorithm to
run in sublinear time (assuming the local correction algorithm runs
in sublinear time), it will be important that this interpolating
set be {\em explicit}: given message coordinate $i \in  [k]$, we should
be able to compute, in time $\poly(\log(k))$,
the codeword coordinate $j \in [n]$ which contains the $i$th symbol
of the codeword.
In~\cite{K}, an explicit such interpolating set was given.

\begin{theorem}[\cite{K}]
\label{thm:interp}
There exist explicit interpolating sets $S$ as above.

Thus there exist explicit systematic encoding maps for
multiplicity codes.
\end{theorem}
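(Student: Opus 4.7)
The plan is to construct the interpolating set $S$ explicitly by induction on the number of variables $m$, combining univariate Hermite interpolation in one coordinate with recursively chosen interpolating sets in fewer variables. Since the evaluation map $P \mapsto (P^{(\ii)}(\ba))_{(\ba, \ii) \in S}$ is $\F_q$-linear between vector spaces of the same dimension $\binom{d+m}{m}$ (provided $|S|$ matches), injectivity is equivalent to bijectivity, and I set up the construction to make injectivity transparent.

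For the base case $m = 1$, I would fix any explicit total ordering $\alpha_1, \alpha_2, \ldots, \alpha_q$ of $\F_q$---for instance, by ordering prime-subfield coordinate tuples lexicographically under a fixed $\F_p$-basis---and take
$$S = \{(\alpha_j, i) : j \geq 1,\ 0 \leq i < s,\ (j-1)s + i \leq d\}.$$
The map $(\alpha_j, i) \mapsto (j-1)s + i$ is a bijection onto $\{0, 1, \ldots, d\}$, so $|S| = d+1$, matching the dimension. If $P(X) \in \F_q[X]$ has degree $\leq d$ and $P^{(i)}(\alpha_j) = 0$ for every $(\alpha_j, i) \in S$, then the multiplicity of $P$ at each $\alpha_j$ is at least the number of $i$'s paired with it, and the total count over all $j$ is at least $d+1$. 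Lemma~\ref{lemma:schwartz} then forces $P = 0$, and explicitness of the indexing is immediate.

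For the inductive step $m \geq 2$, I would view $P \in \F_q[\vecX]$ of degree $\leq d$ as a polynomial in $X_m$ with coefficients $Q_0, Q_1, \ldots, Q_d \in \F_q[X_1, \ldots, X_{m-1}]$, where $\deg Q_k \leq d-k$. The set $S$ is organized as a disjoint union of ``slices'' indexed by pairs $(j, k)$ with $j \geq 1$, $0 \leq k < s$, and $(j-1)s + k \leq d$: the $(j,k)$-slice consists of pairs of the form $((\ba', \alpha_j), (\ii', k))$, where $(\ba', \ii')$ ranges over an explicit inductively constructed interpolating set in $m-1$ variables for degree $\leq d - (j-1)s - k$. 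The cardinalities add up correctly because
$$|S| = \sum_{j,k} \binom{d - (j-1)s - k + (m-1)}{m-1} = \sum_{\ell=0}^{d} \binom{(d-\ell) + (m-1)}{m-1} = \binom{d+m}{m},$$
via the substitution $\ell = (j-1)s + k$ and the hockey-stick identity.

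The main obstacle is to verify invertibility of the evaluation map on this $S$. The key identity
$$P^{(\ii', k)}(\ba', \alpha_j) = \sum_{k' \geq k} \binom{k'}{k} \alpha_j^{k'-k}\, Q_{k'}^{(\ii')}(\ba')$$
shows that the $(j,k)$-slice data mixes contributions from several coefficient polynomials $Q_{k'}$. The remedy is a cascading argument: processing pairs $(j,k)$ in the natural order induced by $\ell = (j-1)s + k$ (from largest to smallest), one iteratively strips off contributions of already-determined higher-degree $Q_{k'}$'s, leaving a Vandermonde-invertible linear combination over $\alpha_1, \ldots, \alpha_{\lfloor \ell/s\rfloor + 1}$ that pins down $Q_k^{(\ii')}(\ba')$ at the relevant $\ba'$; the inductive hypothesis then determines each $Q_k$ uniquely. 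Explicitness of the indexing map---recovering the pair $(\ba, \ii) \in S$ for a given message coordinate in time polylogarithmic in the codeword length---follows because every step of the construction is explicit and unwraps coordinate-by-coordinate in $O(m)$ recursive calls.
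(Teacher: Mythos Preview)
Your route differs from the one the paper alludes to: the paper does not reprove the theorem here but remarks that the interpolating sets from~\cite{K} are ``simple combinations of interpolating sets for Reed-Muller codes''; that is, the reduction there runs along the $s$-axis (down to the $s=1$ case) rather than along the $m$-axis as in your induction. Your approach is a reasonable alternative and, if carried through, yields an equally explicit $S$.

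Two things need fixing, though. First, you never specify the multiplicity order for the recursive $(m-1)$-variable interpolating set used in the $(j,k)$-slice. It must be $s-k$, not $s$: otherwise pairs $(\ii',k)$ with $\wt(\ii')+k\geq s$ would enter $S$, which is not permitted. This also means your induction is really on the pair $(m,s)$, and that should be said explicitly.

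Second, the ``cascading Vandermonde'' step does not go through as written. At the very first stage ($\ell=d$) no $Q_{k'}$ has yet been determined, so there is nothing to strip off, and the single datum in that slice is one linear relation among all of $Q_{k}(\ba'),Q_{k+1}(\ba'),\ldots,Q_d(\ba')$. More generally, for a fixed $k$ the univariate polynomial $X_m\mapsto P^{(\ii',k)}(\ba',X_m)$ has degree $d-k$, yet you only see its values at roughly $(d-k)/s+1$ points $\alpha_j$, far too few for a Vandermonde inversion once $s>1$; and the $(m-1)$-variable sets for different $j$ are not the same, so there is no common $(\ba',\ii')$ across all those $j$ on which to run it anyway. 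A clean replacement is a factor-peeling argument: the $(1,0)$-slice together with the inductive hypothesis forces $P(\vecX',\alpha_1)=0$, hence $(X_m-\alpha_1)\mid P$; writing $P=(X_m-\alpha_1)P_1$ and using that $P^{(\ii',1)}(\ba',\alpha_1)=P_1^{(\ii',0)}(\ba',\alpha_1)$, the $(1,1)$-slice forces $P_1(\vecX',\alpha_1)=0$; iterating through $k=0,\ldots,s-1$ gives $(X_m-\alpha_1)^s\mid P$, after which the $(2,k)$-slices do the same at $\alpha_2$, and one continues until the degree is exhausted.
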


The interpolating sets for multiplicity codes constructed 
above are in fact simple combinations of interpolating sets
for Reed-Muller codes. Furthermore, 
it is known that there exist interpolating sets for Reed-Muller
codes from which polynomial interpolation can be performed in near-linear time.
This implies, by inspecting the proof of Theorem~\ref{thm:interp},
that the encoding map described above can be computed in near-linear time.

\section{Discussion}

\begin{enumerate}

\item The improved local decoding algorithm given in Section~\ref{sec:local}
had two main ideas over the original local decoding algorithm of~\cite{KSY}:
getting more information from each line, and robustly combining this information
across different lines by decoding a multiplicity code.

The first idea is naturally motivated by an incongruity between the three papers: \cite{KSY} (on local correction of multiplicity codes), \cite{GKS} (on affine invariant codes, local correction of affine invariant codes, and their relationship to bounds on Nikodym sets), and \cite{DKSS} (giving lower bounds on the size of Nikodym sets via the extended method of multiplicities).

A Nikodym set is a set $N \subseteq \F_q^m$ such that for every
$\ba \in \F_q^m$, there is some line $L \subseteq \F_q^m$ passing through
$\ba$, such that $L \setminus \{\ba\} \subseteq N$ (i.e., the entire
line, except possibly $\ba$, is contained in $N$).
In~\cite{GKS} it was noted that lower bounds on Nikodym sets 
follow from the existence of algebraic error-correcting codes that
can be locally corrected at $\ba \in \F_q^m$ by querying a line through
$\ba$. In~\cite{DKSS}, the extended method of multiplicities (which in
retrospect can be interpreted in the language of multiplicity codes)
was used to show a lower bound of $\left(\frac{q}{2}\right)^m$ on the
size of Nikodym sets. On the other hand, if one tried to use the original local
correction algorithm of~\cite{KSY} for multiplicity codes
to prove a lower bound on Nikodym sets (via the~\cite{GKS} connection),
we would get a significantly weaker bound than the bound of~\cite{DKSS}.
All this suggests that there should be a better algorithm for local
correction of multiplicity codes, and (looking at the details) a way to get more
information from each line.

The second idea is motivated by a different incongruity: the local correction
algorithm of~\cite{KSY} combines information from different lines
by decoding a Reed-Muller code, but there
seems to be no reason for the $s = 1$ case of multiplicity codes
(i.e., Reed-Muller codes) to
receive preferential treatment amongst all possibilities for $s$.
The new algorithm resolves this incongruity, and lets us use
fewer lines for the local decoding.

\item The decoding algorithm described in Section~\ref{sec:var} based on
$m$ linearly independent lines was motivated by another problem in the combinatorial
geometry: the joints problem~\cite{GK-joints, KSS}.
Alex Vardy and Abdul Basit independently suggested to me that bounds on the joints problem
could potentially be improved using the method of multiplicities. The $m$-line decoding algorithm
for multiplicity codes can be used to give a multiplicity-enhanced proof of the~\cite{KSS}
bound on the joints problem; unfortunately this variant does not improve on the bounds.

A brief outline of the argument goes as follows. One first interpolates a low degree
polynomial vanishing with multiplicity
 at least $a$ at the joints of interest (for some large $a$). Then one argues that this polynomial vanishes at each line
of the collection with multiplicity at least some $c$. Then using Lemma~\ref{lem:joints2}, we deduce
that this polynomial actually vanishes at each of the joints with multiplicity at least $c' \gg a$.
This leads to a contradiction, unless the number of joints is large.

\item The Reed-Muller codes over small fields and in
many variables have been very influential.
The analogous multiplicity
code would be based on order $s$ evaluations of polynomials over
$\F_q$ of total degree at most $d$ and individual degree at most
$sq- 1$. It would
be very interesting to study properties of these codes, and to see
if they have any coding/combinatorial applications.

We note a curious example of how things change when the individual degrees
are bounded: a polynomial $P(X_1, \ldots, X_m) \in \F_2[X_1, \ldots, X_m]$
of individual degree at most $1$ vanishes at a point $\ba \in \F_2^m$ with
multiplicity at least $s$
if and only if $P$ vanishes at all points of $\F_2^n$ which 
are at a Hamming distance at most $s-1$ from $\ba$.

\item There are various interesting variations/cousins
of Reed-Solomon and Reed-Muller codes which are interesting from the
coding perspective.
These include algebraic-geometric codes, BCH codes,
projective Reed-Muller codes, Grassman codes, etc. It would
be interesting to investigate multiplicity-based generalizations
of these codes.
For algebraic-geometric codes, some such investigations were made 
(in the 1-dimensional case, analogous to Reed-Solomon codes) 
by Rosenbloom-Tsfasman~\cite{RT-m-metric}, Xing~\cite{Xing}
and Nielsen~\cite{Nielsen}.

\end{enumerate}

\section{Open Questions}

We conclude with a list of some interesting open questions.
Some of these questions are open even in the classical $s=1$ case
(Reed-Solomon codes and Reed-Muller codes).

\begin{enumerate}

\item What is the list-decoding radius for univariate multiplicity codes?
In other words, what is the largest fraction $\eta$ of errors from which
univariate multiplicity codes of distance $\delta$ and
block-length $n$ can be list-decoded with 
$\poly(n)$ list-size? For general univariate multiplicity codes,
this is only known to be true for $\eta \leq 1 - \sqrt{1-\delta}$,
while for univariate multiplicity codes over prime fields with
$s$ sufficiently large, it is known for every $\eta < \delta$.

This question is even open for Reed-Solomon codes.

Here are some related questions. Does the answer depend on the field?
Does the answer depend on the set of evaluation points? What happens
for multivariate multiplicity codes (again, this is open for multivariate
Reed-Muller codes too)?

\item It is an extremely interesting question whether the list-size for 
list-decoding univariate multiplicity codes over prime fields (as in~\cite{GW, K}) needs to be $\poly(q)$, or if it can be reduced to a constant independent
of $q$.

It is also extremely interesting to know whether the primality of
$q$ is essential for the improved list-decoding of multiplicity codes
in~\cite{GW, K}.

\item Can Nielsen's algorithm for
list-decoding univariate multiplicity codes upto the Johnson bound
be implemented to run
in near-linear time. It would also immediately imply a near-linear time
global algorithm, and a faster local algorithm, for list-decoding multivariate multiplicity codes upto the Johnson radius.
This seems to require some nontrivial adaptation of the ideas
of Alekhnovich~\cite{Alekhnovich}, who showed how to list-decode Reed-Solomon codes in near-linear time.

\item Are multiplicity codes locally testable?

\item Are there any applications of multiplicity codes to computational complexity theory? Reed-Solomon codes and Reed-Muller codes have found many celebrated
applications, and it would be interesting to see if multiplicity codes
can improve on any of them.

\item Are there any practical applications of multiplicity codes? By combining
high rate with sublinear-time decoding, and also supporting various other efficent operations, multiplicity codes seem to be theoretically practical. Perhaps they really are practical?

\end{enumerate}

\section*{Acknowledgements}

Thanks to Alexander Barg and Oleg Musin for organizing an excellent
workshop to celebrate Ilya Dumer's birthday, and for encouraging me to
write this article. Many many thanks to Ilya Dumer for being
a wonderful friend and mentor for many years, and for getting me hooked on
coding theory in the first place.

\bibliography{multcode-survey}
\bibliographystyle{alpha}

\end{document}